\newcommand{\ditto}{ $-$ }
\newcommand{\ceiling}[1]{\left\lceil{#1}\right\rceil}
\newcommand{\floor}[1]{\left\lfloor{#1}\right\rfloor}
\newcommand{\setof}[1]{\left\{{#1}\right\}}
\newcommand{\kuframework}[1]{$\mathbf{k^2U}$}
 \def\myendproof{{\ \vbox{\hrule\hbox{%
   \vrule height1.3ex\hskip0.8ex\vrule}\hrule }}\par}
\renewenvironment{proof}{\noindent{\bf Proof. }}{\hfill\myendproof}
\title{A Note on the Exact Schedulability Analysis for Segmented
  Self-Suspending Systems\thanks{This report has been supported by DFG, as
    part of the Collaborative Research Center SFB876
    (http://sfb876.tu-dortmund.de/).}}
\author{
    Jian-Jia Chen}
\institute{
    Department of Informatics,
    TU Dortmund University, Germany\\
}
\begin{document}

\maketitle

\vspace{-0.2in}
\begin{abstract}
  This report considers a sporadic real-time task system with $n$
  sporadic tasks on a uniprocessor platform, in which the
  lowest-priority task is a segmented self-suspension task and the
  other higher-priority tasks are ordinary sporadic real-time
  tasks. This report proves that the schedulability analysis for
  fixed-priority preemptive scheduling even with only one segmented
  self-suspending task as the lowest-priority task is $co{\cal
    NP}$-hard in the strong sense. Under fixed-priority preemptive
  scheduling, Nelissen et al. in ECRTS 2015 provided a mixed-integer
  linear programming (MILP) formulation to calculate an upper bound on
  the worst-case response time of the lowest-priority self-suspending
  task.  This report provides a comprehensive support to explain
  several hidden properties that were not provided in their
  paper. We also provide an input task set to explain why the
  resulting solution of their MILP formulation can be quite far from
  the exact worst-case response time.
\end{abstract}

\section{Introduction and Models}

We consider a system ${\bf T}$ of $n$ sporadic real-time tasks. A
sporadic task $\tau_i$ in ${\bf T}$ releases an infinite number of
jobs that arrive with the minimum inter-arrival time constraint. A
sporadic real-time task $\tau_i$ is characterized by its
\emph{worst-case execution time} $C_i$, its \emph{minimum
  inter-arrival time} (also called period) $T_i$ and its \emph{relative deadline} $D_i$.
In addition, each job of task $\tau_i$ has also a specified worst-case
self-suspension time $S_i$.  When a job of task $\tau_i$ arrives at
time $t$, the job should finish no later than its \emph{absolute
  deadline} $t+D_i$, and the next job of task $\tau_i$ can only be
released no earlier than $t+T_i$.  If the relative deadline $D_i$ of
task $\tau_i$ in the task set is always equal to (no more than,
respectively) the period $T_i$, such a task set is called a
\emph{implicit-deadline} (\emph{constrained-deadline}, respectively)
task set (system). If $D_i > T_i$ for a certain task $\tau_i$ in ${\bf
  T}$, then the task system is an \emph{arbitrary-deadline} task
system.  The response time of a
job is defined as its finishing time minus its release (arrival)
time. The worst-case response time $WCRT_i$ is 
the upper bound on the response times of all the jobs of task $\tau_i$.
The \emph{response time analysis} of a task  $\tau_i$ under a scheduling
algorithm is to provide a safe upper bound on $WCRT_i$.

There are two typical models for self-suspending sporadic task
systems: 1) the dynamic self-suspension task model, and 2) the
segmented self-suspension task model. In the \emph{dynamic}
self-suspension task model, e.g.,~\cite{ECRTS-AudsleyB04,RTAS-AudsleyB04,RTCSA-KimCPKH95,MingLiRTCSA1994,LiuChen:rtss2014,Huang_2015,ChenNelissenHuang-ecrts16}, in addition to the worst-case execution time
$C_i$ of sporadic task $\tau_i$, we have also the worst-case
self-suspension time $S_i$ of task $\tau_i$. In the \emph{segmented} self-suspension
task model, e.g., \cite{RTCSA-BletsasA05,Kim2016,WC16-suspend-DATE,RTSS-ChenL14,Huang:multiseg,PH:rtss98}, the execution behaviour of a job of task $\tau_i$ is
specified by interleaved computation segments and self-suspension
intervals.  The dynamic
self-suspension model provides a simple specification by ignoring the
juncture of I/O access, computation offloading, or
synchronization. However, if the suspending behaviour can be
characterized by using a segmented pattern, the segmented
self-suspension task model can be more appropriate.

This report considers a \emph{segmented} self-suspension task
model. If a task $\tau_i$ can suspend itself, a job of task $\tau_i$
is further characterized by the computation segments and suspension
intervals as an array
$(C_{i}^1,S_{i}^1,C_{i}^2,S_{i}^2,...,S_{i}^{m_i-1},C_{i}^{m_i})$,
composed of $m_i$ computation segments separated by $m_i-1$ suspension
intervals. 

In this report, we will \emph{only} consider the following special
case in fixed-priority (FP) preemptive scheduling:
\begin{itemize}
\item Task $\tau_n$ is the lowest-priority task and is a segmented
  self-suspension task. We will further assume that $D_n \leq T_n$.
\item There are $n-1$ higher-priority tasks, $\tau_1,
  \tau_2, \ldots, \tau_{n-1}$. These $n-1$ tasks are indexed from the
  highest priority $\tau_1$ to the lowest priority $\tau_{n-1}$.  The
  task set $\setof{\tau_1, \tau_2, \ldots, \tau_{n-1}}$ can be an
  arbitrary-, constrained-, or implicit-deadline task
  set.
\end{itemize}

Since we only have one self-suspending task in this report, we use $m$
to denote $m_n$ for notational simplicity, where $m \geq 2$. Moreover, the arrival time
of a computation segment is defined as the moment when the computation
segment is ready to be executed, after all its previous computation
segments and self-suspension intervals are done. In this report,
the response time $R_j$ of a computation segment $C_n^j$ is defined as
the finishing time of the computation segment minus the arrival time
of the computation segment.

We consider uniprocessor fixed-priority preemptive scheduling.  We say
that a release pattern of the tasks in ${\bf T}$ is \emph{valid} if
the jobs of the tasks in ${\bf T}$ do not violate any of the temporal
characteristics regarding to the minimum inter-arrival time, worst-case execution
time, and worst-case self-suspension time.  We say that a schedule is
\emph{feasible} if all the deadlines are met for a valid release
pattern of the tasks in ${\bf T}$. Moreover, a task system (set) is
\emph{schedulable} by a scheduling algorithm if the resulting schedule for any valid release pattern of ${\bf T}$
is always feasible. A \emph{schedulability test} of a scheduling
algorithm for a given task system is to validate whether the task
system is schedulable by the scheduling algorithm. A \emph{sufficient}
schedulability test provides only sufficient conditions for validating
the schedulability of a task system.  A \emph{necessary}
schedulability test provides only necessary conditions to allow the
schedulability of a task system.  An \emph{exact} schedulability test
provides necessary and sufficient conditions for validating the
schedulability.

By the assumption of fixed-priority preemptive scheduling, a
schedulability test of task $\tau_i$ can be done by removing all the
lower-priority tasks, $\tau_{i+1}, \ldots, \tau_n$. Since task
$\tau_n$ is the lowest-priority task, the schedulability test of the
$n-1$ higher-priority tasks under the given FP preemptive scheduling
can be done by using the well-known response time analysis,
\cite{DBLP:conf/rtss/Lehoczky90,DBLP:conf/rtss/LehoczkySD89}.

Therefore, the remaining problem is to validate whether the segmented
self-suspension task $\tau_n$ is schedulable by FP preemptive
scheduling (as the lowest-priority task). For this problem, the only
results in the literature were provided by Lakshmanan and Rajkumar
\cite{LR:rtas10} and Nelissen et al. \cite{ecrts15nelissen}.
Lakshmanan and Rajkumar proposed a pseudo-polynomial-time worst-case
response time analysis, by revising the well-known critical instant theorem originally defined in \cite{Liu_1973}. This has been recently disproved by Nelissen et
al. \cite{ecrts15nelissen}. The schedulability test by Nelissen et
al. \cite{ecrts15nelissen} requires exponential-time complexity even
for such a case when the task system has \emph{only one
  self-suspending task}.  Furthermore, Nelissen et
al. \cite{ecrts15nelissen} also assumed that all the tasks are with
constrained deadlines, i.e., $D_i \leq T_i$ for every task $\tau_i \in
{\bf T}$.  The other solutions \cite{Huang:multiseg,PH:rtss98}
require pseudo-polynomial time complexity but are only sufficient
schedulability tests.

Regarding to computational complexity, it was shown by Ridouard et
al. \cite{DBLP:conf/rtss/RidouardRC04} that the scheduler design
problem for the segmented self-suspension task model is ${\cal
  NP}$-hard in the strong sense.\footnote{Ridouard et
  al. \cite{DBLP:conf/rtss/RidouardRC04} termed this problem as the
  feasibility problem for the decision version to verify the existence
  of a feasible schedule.} The proof in
\cite{DBLP:conf/rtss/RidouardRC04} only needs each segmented
self-suspending task to have one self-suspension interval with two
computation segments.  It was reported by Chen et
al. \cite{suspension-review-jj} that the schedulability test problem
in several cases is also strongly $co{\cal NP}$-hard under
dynamic-priority scheduling, in which the priority of a job may change
over time. Such observations made in \cite{suspension-review-jj} are
based on the special cases to reduce from the schedulability test
problem of the ordinary constrained-deadline sporadic task systems
(without self-suspension), which has been recently proved to be
$co{\cal NP}$-hard in the strong sense by Ekberg and Wang
\cite{DBLP:conf/ecrts/Ekberg015} under earliest-deadline-first (EDF)
scheduling.  For fixed-priority (FP) preemptive scheduling, in which a
task is assigned a fixed priority level, the computational complexity
of the schedulability test problem for the segmented self-suspension
task model is open.

{\bf Contribution:} This report provides the following results of the
schedulability test problem and the worst-case response time analysis
for self-suspending sporadic task systems:
\begin{itemize}
\item In Section~\ref{sec:computational-complexity-fp}, we prove that
  the schedulability analysis for fixed-priority (FP)
  preemptive scheduling even with only one segmented self-suspending
  task as the lowest-priority task is $co{\cal NP}$-hard in the strong
  sense when there are more than one self-suspension interval (or
  equivalently more than two computation segments). The computational
  complexity analysis is valid for both implicit-deadline and
  constrained-deadline cases, when the priority assignment is given.
  Our proof also shows that validating whether there exists a feasible
  priority assignment is also $co{\cal NP}$-hard in the strong sense
  for constrained-deadline segmented self-suspending task systems.
\item This report shows that the upper bound on the
  worst-case response time derived from the MILP developed by Nelissen
  et al. \cite{ecrts15nelissen} can be very far from
  the actual worst-case response time in Section~\ref{sec:gap}.
\end{itemize}

\section{Proof for the Necessary Condition of Worst-Case Response Time}
\label{sec:proof-necessary}

Let $\sigma$ be a fixed-priority preemptive schedule for a valid release pattern
$RP$ of the task system ${\bf T}$. We consider two cases:
\begin{itemize}
\item Case 1 when all the jobs of task $\tau_n$ in schedule $\sigma$
  have their response times no more than $T_n$: We pick an arbitrary
  job $J$ of task $\tau_n$ from $\sigma$. For this case, removing all
  the other jobs of task $\tau_n$ (except $J$) from $\sigma$ does not change the
  schedule of the remaining jobs in $\sigma$. 
\item Case 2 when there exists a job of task $\tau_n$ in schedule
  $\sigma$, in which the response time of the job is larger than
  $T_n$.  Let $J$ be the first job in the schedule $\sigma$ in which
  the response time of $J$ is strictly larger than $T_n$. For this
  case, removing all the other jobs of task $\tau_n$ (except $J$) from $\sigma$
  does not change the schedule of the remaining jobs in $\sigma$. 
\end{itemize}
In both cases, for $j=1,2,\ldots,m$, suppose that the arrival time and
finishing time of the $j$-th computation segment of job $J$ is $g_j$
and $f_j$, respectively.  In both cases, by definition, $g_1 \leq f_1
\leq g_2 \leq f_2 \leq \cdots \leq g_m \leq f_m$, and $f_m - g_1 \leq
WCRT_n$.

The following lemmas provide the necessary conditions for the
worst-case release patterns for both cases. Specifically, Condition 1
in Lemma~\ref{lemma:necessary-critical-wcrt} was also provided in
Lemma 2 by Nelissen et al. in \cite{ecrts15nelissen}. For the
completeness of this report and the correctness of the MILP, we also
include the proof of Condition 1 in
Lemma~\ref{lemma:necessary-critical-wcrt} here.

\begin{lemma}
  \label{lemma:necessary-critical-wcrt}
  If $WCRT_n \leq T_n$, then the worst-case response time of task
  $\tau_n$ happens (as necessary conditions) when
  \begin{enumerate}
  \item[Condition 1:] all the higher-priority tasks $\tau_1, \tau_2,
    \ldots, \tau_{n-1}$ only release their jobs in time intervals $[g_j,
    f_j)$ for $j=1,2,\ldots,m$, and
  \item[Condition 2:] $g_{j+1} - f_j$ is always $S_n^j$, $\forall j=1,2,\ldots,m-1$, and 
  \item[Condition 3:] all the jobs are executed with their worst-case execution times.
  \end{enumerate}
\end{lemma}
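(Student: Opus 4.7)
My proof would proceed by a three-step exchange argument. Starting from an arbitrary valid release pattern $RP$ in which the chosen job $J$ realises response time $R$, I would transform $RP$ into a valid pattern $RP'$ satisfying Conditions 1--3 while never letting the response time of $J$ drop below $R$. The hypothesis $WCRT_n \leq T_n$ enters twice: it guarantees that within the window $[g_1, f_m]$ no other job of $\tau_n$ needs to be scheduled, so every modification interacts only with higher-priority jobs, and it keeps the window short enough that the periodic release constraints of the higher-priority tasks are easy to maintain under later shifts.

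Conditions 3 and 2 are the easier, monotone inflations. For Condition 3, replace the actual execution of every job (whether a segment of $J$ or a higher-priority job) by its WCET; since WCETs are only upper bounds this preserves validity, and it either lengthens $f_j - g_j$ directly (for $J$'s own segments) or increases the cumulative interference suffered by $J$ (for higher-priority jobs). For Condition 2, scan $j = 1, 2, \ldots, m-1$; whenever $g_{j+1} - f_j$ is strictly less than $S_n^j$, set $\delta = S_n^j - (g_{j+1} - f_j)$ and shift rightward by $\delta$ every event scheduled at or after $g_{j+1}$, including the arrivals of $J$'s later segments and the releases of higher-priority jobs occurring at or after $g_{j+1}$. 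Periodicity of each higher-priority task is preserved because two consecutive releases either both lie in the shifted block or both lie outside it, and the response time grows by exactly $\delta$.

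Condition 1 is the delicate step. For every higher-priority job released at a time $t$ that does not already lie in some $[g_j, f_j)$, I would relocate its release: to $g_{j+1}$ if $t \in [f_j, g_{j+1})$, to $g_1$ if $t < g_1$, and discard it entirely if $t \geq f_m$; any subsequent job of the same task is shifted by the same offset to respect its minimum inter-arrival constraint. The key observation is that execution performed during a suspension interval of $J$ contributes nothing to $f_m - g_1$, so repositioning it into the following computation segment either strictly delays that segment or leaves its finishing time unchanged, so $R$ cannot decrease. The main obstacle is the bookkeeping of these simultaneous periodic shifts across several higher-priority tasks at once; here the bound $WCRT_n \leq T_n$ is indispensable, because it keeps the affected window short enough that the forward shifts of consecutive jobs of any one higher-priority task remain consistent with its period $T_i$.
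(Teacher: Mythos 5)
Your handling of Conditions 2 and 3 is fine, but the ``key observation'' you use for Condition~1 has a genuine hole, and it is exactly where the paper does something different. Your relocation step moves a higher-priority job released in a gap $[f_j,g_{j+1})$ forward to $g_{j+1}$ and then shifts all later jobs of the same task by the same offset to keep the minimum inter-arrival time. The claim that this ``either strictly delays that segment or leaves its finishing time unchanged'' only accounts for the relocated job itself, not for the induced forward shifts, and those shifts can \emph{remove} interference. Concrete instance: let $\tau_1$ have $C_1=4.5$ and a huge period, $\tau_2$ have $C_2=1$, $T_2=5$, and let $\tau_n$ have $C_n^1=C_n^2=1$, $S_n^1=4$. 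Release segment~1 at $g_1=0$ (so $f_1=1$, $g_2=5$), $\tau_1$ at time $1$, $\tau_2$'s job $A$ at $1.5$ and its next job $B$ at $6.5$. Then $\tau_1$ runs in $[1,5.5)$, $A$ in $[5.5,6.5)$, $B$ in $[6.5,7.5)$, segment~2 in $[7.5,8.5)$: response time $8.5$. Your step moves $A$ to $g_2=5$ (offset $3.5$), forcing $B$ to $10$; now $A$ still runs in $[5.5,6.5)$ (it was fully blocked during the gap anyway, so nothing is gained), segment~2 finishes at $7.5$, and $B$ no longer interferes at all: the response time \emph{drops} to $7.5$. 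So the per-step monotonicity your argument rests on is false whenever a gap job already carries its full interference into the next segment while the forced shift expels a later job from the window, and the vague appeal to $WCRT_n\leq T_n$ keeping ``the shifts consistent with the period'' does not repair this.

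The paper avoids this trap by never delaying any higher-priority release. Instead it moves the \emph{arrival times of $J$'s own segments earlier}: $g_j$ is pulled back to the start $t_j$ of the maximal interval of higher-priority-only execution immediately preceding it (where just before $t_j$ the processor idles or runs $J$). This leaves the fixed-priority schedule completely unchanged, can only enlarge $f_m-g_1$, and keeps the pattern valid since the new suspension lengths satisfy $g_{j+1}-f_j\leq S_n^j$. After this re-anchoring, every higher-priority job released before $g_1$, at or after $f_m$, or in a gap $[f_j,g_{j+1})$ has, by the choice of $t_{j+1}$, already completed before the next segment arrives, so it contributes no interference and can simply be \emph{deleted}---an operation that trivially cannot decrease $J$'s response time. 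If you want to salvage your relocation-based route, you would have to argue about the net effect of the relocation together with all induced shifts (or process jobs in a carefully chosen order and prove an invariant), which is precisely the bookkeeping you flagged as the obstacle; as written, the argument does not go through.
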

\begin{proof}
  We prove this lemma by showing that job $J$ defined at the opening
  of this section can only increase its response time by following
  these three conditions. By the assumption $WCRT_n \leq T_n$, we
  consider Case 1. 

  We start with Condition 1. Suppose that the
  schedule $\sigma$ has to execute certain higher-priority jobs in
  time interval $[t_1, g_1)$ and the processor idles right before
  $t_1$. That is, the processor does not idle between $t_1$ and
  $g_1$. In this case, we can change the arrival time of the
  computation segment $C_n^1$ of job $J$ from $g_1$ to $t_1$. This
  change in the release pattern $RP$ does not change the resulting preemptive
  schedule $\sigma$, but the response time of $J$ becomes $f_m - t_1 \geq
  f_m - g_1$ since $t_1 \leq g_1$.

  For $j=2,3,\ldots,m$, suppose that the schedule $\sigma$ has to
  execute certain higher-priority jobs to keep the processor busy in time interval $[t_j, g_j)$
  and the processor either idles or executes job $J$ right before
  $t_1$. By definition, for $j=2,3,\ldots,m$, we know that $t_j \geq
  f_{j-1}$; otherwise the $(j-1)$-th computation segment of job
  $J$ cannot be finished at time $f_{j-1}$. Similarly, we can change
  the arrival time of the computation segment $C_n^j$ of job $J$ from
  $g_j$ to $t_j$. This change in the release pattern $RP$ does not change
  the resulting preemptive schedule $\sigma$ nor the response time of $J$. 

  Let $g_j$ for $j=1,2,\ldots,m$ be the revised arrival time of the
  computation segment $C_n^j$ of job $J$ changed above. After we
  change the release pattern of job $J$, we know that the suspension
  time between the two computation segments $C_n^j$ and $C_n^{j+1}$ of
  job $J$ is exactly $g_{j+1} - f_j \leq S_n^j$ for
  $j=1,2,\ldots,m-1$. Therefore, the revised release pattern remains valid.

  Now, we can safely remove the higher jobs released before $g_1$,
  after or at $f_m$, and in time intervals $[f_1, g_2)$, $[f_2, g_3),
  \ldots, [f_{m-1}, g_m)$. Since these jobs do not (directly or
  indirectly) interfere in the execution of job $J$ at all, removing them does not
  have any impact on the execution of job $J$. Again, let $RP$ be the revised release pattern, and let $\sigma$ be its corresponding FP preemptive schedule. Now, 
  Condition 1 holds.
 
  We now revise the release pattern of the higher-priority jobs and
  $J$ for Condition 2. We start from $j=2$. If $g_j - f_{j-1} <
  S_n^{j-1}$, then we greedily perform the following steps:
  \begin{itemize}
  \item First, any higher-priority jobs released after or at $g_j$ are delayed exactly by $S_n^{j-1} - (g_j - f_{j-1})$ time units.
  \item Second, the $(j-1)$-th suspension interval of job $J$ is
    increased to suspend for exactly $S_n^{j-1}$ time units, and
    $g_\ell$ is set to $g_\ell + S_n^{j-1} - (g_j - f_{j-1})$ for
    $\ell=j,j+1,\ldots,m$.
  \end{itemize}
  With the above two steps, the schedule $\sigma$ remains almost
  unchanged by just adding $S_n^{j-1} - (g_j - f_{j-1})$ amount of
  idle time. We repeat the above procedure for $j=2,3,\ldots,m$.
  Again, let $\sigma$ be the above revised schedule with the revised
  release pattern. Now, after the adjustment, Condition 2 holds, and
  the response time of job $J$ in this schedule is larger than or
  equal to the original one.

  Condition 3 is rather trivial. If a job in schedule $\sigma$ has a
  shorter execution time, we can increase its execution time to its
  worst-case execution time. We can then adjust the release pattern
  with a similar procedure like the operations for Condition 2 to
  increase the response time of job $J$.  

  With the above discussions, we reach the conclusion of this lemma.
\end{proof}

\begin{lemma}
  \label{lemma:necessary-critical-wcrt-larger}
  If $WCRT_n > T_n$, the response time of task
  $\tau_n$ in a release pattern that satisfies Conditions 1, 2, and 3
  in Lemma~\ref{lemma:necessary-critical-wcrt} is larger than $T_n$.
\end{lemma}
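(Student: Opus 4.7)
The plan is to mirror the proof of Lemma~\ref{lemma:necessary-critical-wcrt}, but starting from Case~2 of the dichotomy at the opening of Section~\ref{sec:proof-necessary} instead of Case~1. The hypothesis $WCRT_n > T_n$ guarantees a valid release pattern $RP$ and an FP preemptive schedule $\sigma$ in which some job of $\tau_n$ has response time strictly greater than $T_n$. First, I would let $J$ be the \emph{first} such offending job in $\sigma$, with $g_1 \leq f_1 \leq \cdots \leq g_m \leq f_m$ denoting the arrival and finishing times of its $m$ computation segments, so that $f_m - g_1 > T_n$. By the choice of $J$ as the first offending job (exactly as in Case~2), all other jobs of $\tau_n$ can be removed from $\sigma$ without altering the execution of any of the remaining jobs; this is the only property of Case~1 that the proof of Lemma~\ref{lemma:necessary-critical-wcrt} actually used, and it equally holds here.

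Next, I would apply the same three transformations, in the same order, to the residual release pattern: (i) shift the arrival times $g_j$ of $J$'s computation segments leftward to absorb any higher-priority workload executing immediately before them, and discard every higher-priority release falling outside the resulting intervals $[g_j, f_j)$, to enforce Condition~1; (ii) enlarge each suspension gap $g_{j+1} - f_j$ up to exactly $S_n^j$, shifting all later higher-priority releases and later segments of $J$ by the same amount, to enforce Condition~2; (iii) inflate every remaining job's execution time to its worst-case value, propagating the induced delays analogously, to enforce Condition~3. Each of these operations was already shown, within the proof of Lemma~\ref{lemma:necessary-critical-wcrt}, to yield another valid release pattern whose corresponding FP preemptive schedule gives $J$ a response time no smaller than $f_m - g_1$. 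The conclusion is then immediate: the final release pattern satisfies Conditions~1, 2, and 3, and in it the response time of $J$ is at least $f_m - g_1 > T_n$.

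The main obstacle I anticipate is the bookkeeping needed to verify that none of the three transformations silently used $WCRT_n \leq T_n$. Since deleting the other jobs of $\tau_n$ from $\sigma$ leaves $J$ as the sole job of $\tau_n$ under consideration, the validity constraints on $\tau_n$ (period, deadline, self-suspension) reduce to constraints on $J$ alone, and the steps that push later higher-priority releases further into the future clearly preserve validity. Hence the arguments from Lemma~\ref{lemma:necessary-critical-wcrt} transfer without modification, and the only genuine use of the hypothesis $WCRT_n > T_n$ is in guaranteeing the existence of the starting offending job $J$.
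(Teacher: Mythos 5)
Your proposal is correct and follows essentially the same route as the paper: the paper's own proof simply invokes Case~2 of the dichotomy at the opening of the section (guaranteed by $WCRT_n > T_n$) and then states that the rest is identical to the proof of Lemma~\ref{lemma:necessary-critical-wcrt}, which is exactly the transfer argument you spell out. Your added check that none of the three transformations secretly relied on $WCRT_n \leq T_n$ is a useful elaboration but not a different approach.
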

\begin{proof}
  By the assumption $WCRT_n > T_n$, we consider that there exists a
  schedule $\sigma$ in which Case 2 (at the opening of the section)
  holds. The rest of the proof is identical to the proof of
  Lemma~\ref{lemma:necessary-critical-wcrt}.
\end{proof}

We now demonstrate a few properties based on
Lemma~\ref{lemma:necessary-critical-wcrt}.  By
Lemma~\ref{lemma:necessary-critical-wcrt}, for obtaining the (exact)
worst-case response time of task $\tau_n$, we simply have to examine
all the release patterns that satisfy the three conditions in
Lemma~\ref{lemma:necessary-critical-wcrt}. Specifically, Condition 1
of Lemma~\ref{lemma:necessary-critical-wcrt} implies that we can set an
\emph{offset} variable $O_{i,j}$ (with $O_{i,j} \geq 0$) to define the release time of the
first job of task $\tau_i$, arrived no earlier than $g_j$. That is,
for a given $O_{i,j}$, the first job released by task $\tau_i$ no
earlier than $g_j$ is released at time $O_{i,j} + g_j$.

With the above definitions, we can have the following properties.
These properties can be used to reduce the search space for the
worst-case response time of task $\tau_n$. The first property was also
provided in Corollary 1 in the paper by Nelissen et
al. \cite{ecrts15nelissen}.

\begin{property}
  \label{property:p1}
  For a higher-priority task $\tau_i$, there must be at least one $O_{i,j}$ equal to $0$.
\end{property}
\begin{proof}
  This is quite trivial. By Lemma~\ref{lemma:necessary-critical-wcrt},
  either task $\tau_i$ does not release any job to interfere in any
  computation segment of job $J$ or $\tau_i$ releases at least one job
  to interfere in certain computation segments of job $J$. For the
  former case, we can set $O_{i,1}$ to $0$, which does not decrease
  the resulting worst-case response time. For the latter case, if all
  $O_{i,j} > 0$ for $j=1,2,\ldots,m$, let $j^*$ be the earliest
  computation segment of job $J$ where task $\tau_i$ releases some
  jobs to interfere in. We can greedily set $O_{i,j^*}$ to $0$, which
  does not reduce the resulting worst-case response time.
\end{proof}

\begin{property}
  \label{property:p2}
  If the period of task $\tau_i$ is small enough, then the following
  property holds:
  \begin{itemize}
  \item {\bf Case when $j=1$}: If $T_i \leq S_n^1$, then $O_{i,1}$ is $0$.  
  \item {\bf Case when $j=m$}: If $T_i \leq S_n^{m-1}$, then $O_{i,m}$ is $0$.
  \item {\bf Case when $2 \leq j \leq m-1$}: If $T_i \leq S_n^{j-1}$ and $T_i
    \leq S_n^{j}$ then $O_{i,j}$ is $0$.
  \end{itemize}
  That is, task $\tau_i$ releases one job together with the $j$-th
  computation segment of the job $J$. Moreover, task $\tau_i$ also
  releases the subsequent jobs strictly periodically with period $T_i$
  until the $j$-th computation segment of job $J$ finishes.
\end{property}
\begin{proof}
  The first case is clear due to Condition 2 in
  Lemma~\ref{lemma:necessary-critical-wcrt}, i.e., the suspension time
  of the first suspension interval is exactly $S_n^1$, since the
  release pattern of task $\tau_i$ to interfere in the first
  computation segment of job $J$ is independent from the other
  computation segments. The second case is similar.

  For any $j$ with $2 \leq j \leq m-1$, the condition $T_i \leq
  S_n^{j-1}$ implies that the release pattern of task $\tau_i$ to
  interfere in the $(j-1)$-th computation segment of job $J$ is
  independent from the release pattern to interfere in the $j$-th
  computation segment. Similarly, the condition $T_i \leq S_n^{j}$
  implies that the release pattern of task $\tau_i$ to interfere in
  the $j$-th computation segment of job $J$ is independent from the
  release pattern of task $\tau_i$ to interfere in the $(j+1)$-th computation
  segment. Therefore, when $T_i \leq S_n^{j-1}$ and $T_i \leq
  S_n^{j}$, the release pattern of task $\tau_i$ to interfere in the
  $j$-th computation segment of job $J$ is independent from the other
  computation segments.

  Moreover, when the release pattern of task $\tau_i$ to interfere in
  the $j$-th computation segment of job $J$ is independent from the
  other computation segments, the worst-case release pattern of task
  $\tau_i$ to interfere in the $j$-th computation segment of job $J$
  is to release 1) one job together with the $j$-th computation
  segment of the job $J$, and 2) the subsequent jobs strictly
  periodically with period $T_i$ until the $j$-th computation segment
  of job $J$ finishes.
\end{proof}

\begin{property}
  \label{property:p3}
  If $T_i \geq T_n$ and $WCRT_n \leq T_n$, then a higher-priority task $\tau_i$ only
  releases one job together with one of the $m$ computation segments
  of the job (under analysis) of task $\tau_n$.
\end{property}
\begin{proof}
  This comes from Condition 1 in
  Lemma~\ref{lemma:necessary-critical-wcrt} and Property
  \ref{property:p1}.
\end{proof}

\begin{property}
 \label{property:p4}
 If $T_i - C_i$ is small enough, then the following property of task
 $\tau_i$ holds:
  \begin{itemize}
  \item {\bf Case when $j=1$}: If $T_i - C_i \leq S_n^1$, then $O_{i,1}$ is $0$.  
  \item {\bf Case when $j=m$}: If $T_i - C_i \leq S_n^{m-1}$, then $O_{i,m}$ is $0$.
  \item {\bf Case when $2 \leq j \leq m-1$}: If $T_i -C_i \leq
    S_n^{j-1}$ and $T_i - C_i
    \leq S_n^{j}$ then $O_{i,j}$ is $0$.
  \end{itemize}
  That is, task $\tau_i$ releases one job together with the $j$-th
  computation segment of the job $J$. Moreover, task $\tau_i$ also
  releases the subsequent jobs strictly periodically with period $T_i$
  until the $j$-th computation segment of job $J$ finishes.  
\end{property}
\begin{proof}
  This is a simple extension of Property~\ref{property:p2}.
\end{proof}

\section{Computational Complexity}
\label{sec:computational-complexity-fp}

In this section, we will prove that the schedulability test problem
for FP preemptive scheduling even with only one
\emph{segmented} self-suspending task as the lowest-priority task in
the task system is $co{\cal NP}$-hard in the strong
sense. Specifically, we will also show that our reduction implies
that finding whether there exists a feasible priority assignment under
FP scheduling for constrained-deadline task systems is
also $co{\cal NP}$-hard in the strong sense. We will
first consider constrained-deadline task systems and then revise the
reduction to consider implicit-deadline task systems.

Our reduction is from the 3-PARTITION problem
\cite{b:Gary79}:\footnote{For notational consistency and brevity in
  our reduction, we index the $3M$ integer numbers from $2$.}
\begin{definition}[3-PARTITION Problem] We are given a positive
  integer $V$, a positive integer $M$, and a set of $3M$ integer
  numbers $\setof{v_2, v_3, \ldots, v_{3M+1}}$ with the condition
  $\sum_{i=2}^{3M+1} v_i = MV$, in which $1 \leq V/4 < v_i < V/2$ and $M
  \geq 3$.  Therefore, $V \geq 3$. \\{\bf
    Objective:} The problem is to partition the given $3M$ integer
  numbers into $M$ disjoint sets ${\bf V}_1, {\bf V}_2, \ldots, {\bf
    V}_M$ such that the sum of the numbers in each set ${\bf V}_i$ for
  $i=1,2,\ldots,M$ is $V$, i.e., $\sum_{v_j \in {\bf V}_i} v_j = V$. \hfill\myendproof
\end{definition}
The decision version of the 3-PARTITION problem to verify whether such
a partition into $M$ disjoint sets exists or not is known ${\cal
  NP}$-complete in the strong sense \cite{b:Gary79} when $M \geq 3$.

\subsection{Constrained-Deadline Task Systems}
\label{sec:conp-hard-constrained}

\begin{definition}[Reduction to a constrained-deadline system]
\label{def:reduced-taskset-constrained}
For a given input instance of the 3-PARTITION problem, we construct
$n=3MV+2$ sporadic tasks as follows:
\begin{compactitem}
\item For task $\tau_1$, we set $C_1 = V, S_1 = 0, D_1=V, T_1=3V$.
\item For task $\tau_i$ with $i=2,3,\ldots, 3M+1$, we set $C_i = v_i,
  S_i = 0, T_i=21MV$ and $D_i=3MV/2$ if $M$ is an even number or
  $D_i=3MV/2+ V/2$ if $M$ is an odd number.
\item For task $\tau_{3M+2}$, we create a segmented self-suspending
  task with $M$ computation segments separated by $M-1$
  self-suspension intervals\footnote{The first version of the proof
    uses $6V$ for $S_{3M+2}^j$ for $j=1,2,\ldots,M-1$. By applying
    Property~\ref{property:p3}, we can also set $S_{3M+2}^j$ to $2V$
    and $D_{3M+2} = M (4V + 1) -V + 2V(M-1)= 6MV+M-3V$.}, i.e., $m=m_{3M+2}=M$, in which $C_{3M+2}^j= V+1$ for
  $j=1,2,\ldots,M$, $S_{3M+2}^j = 6V$ for $j=1,2,\ldots,M-1$,
  $D_{3M+2} = M (4V + 1) -V + 6V(M-1)= 10MV+M-7V$, and $T_{3M+2}=21MV$.
\end{compactitem}
Due to the stringent relative deadline of task $\tau_1$, it
must be assigned as the highest-priority task. Moreover, the $3M$
tasks, i.e., $\tau_2, \tau_3, \ldots, \tau_{3M+1}$, created by using
the integer numbers from the 3-PARTITION problem instance are assigned
lower priorities than task $\tau_1$ and higher priorities than task
$\tau_{3M+2}$. Since the integer numbers in the 3-PARTITION problem
instance are given in an arbitrary order, without loss of generality,
we index the tasks in $\tau_2, \tau_3, \ldots, \tau_{3M+1}$ by the
given priority assignment, i.e., a lower-indexed task has higher
priority. (In fact, we can also assign all these $3M$ tasks with the
same priority level.)  \hfill\myendproof
\end{definition}

For the rest of the proof, the task set created in
Definition~\ref{def:reduced-taskset-constrained} is referred to as
${\bf T}^{red}$.

\begin{lemma}
  \label{lemma:feasibility-sporadic-tasks}
  Tasks $\tau_1, \tau_2, \ldots, \tau_{3M+1}$ in ${\bf T}^{red}$ can
  meet their deadlines under the specified FP
  scheduling.
\end{lemma}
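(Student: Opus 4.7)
My plan is to apply the standard fixed-priority response time analysis of Lehoczky et al.\ to each of the tasks $\tau_1, \tau_2, \ldots, \tau_{3M+1}$. None of these tasks self-suspends (all have $S_i=0$), and the lowest-priority self-suspending task $\tau_{3M+2}$ has no effect on them under FP preemptive scheduling, so the classical synchronous-arrival worst-case workload analysis is exact. Since the workload function $W_i(t)$ is monotone non-decreasing in $t$, it suffices to exhibit $W_i(D_i) \le D_i$; this forces the least fixed point of $t=W_i(t)$ to lie at or below $D_i$. For $\tau_1$ the claim is immediate because $C_1 = D_1 = V$ and there is no higher-priority task.

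For $i \in \{2,\ldots,3M+1\}$ the higher-priority tasks are $\tau_1$ (with $C_1=V$, $T_1=3V$) and $\tau_2,\ldots,\tau_{i-1}$ (each with execution time $v_j$ and period $21MV$). I would first observe that $D_i \le 3MV/2 + V/2 < 21MV = T_j$ for every $j \in \{2,\ldots,3M+1\}$, so each such $\tau_j$ contributes at most one job in the interval of length $D_i$. Combined with $\sum_{j=2}^{3M+1} v_j = MV$ from the 3-PARTITION instance, the total contribution from $\tau_i$ itself and the ordinary higher-priority tasks satisfies
\[
C_i + \sum_{j=2}^{i-1} \left\lceil \frac{D_i}{T_j}\right\rceil v_j \;=\; \sum_{j=2}^{i} v_j \;\le\; MV.
\]

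It then remains to bound the interference of $\tau_1$, namely $\lceil D_i/(3V) \rceil\cdot V$, and this is where the parity split on $M$ in Definition~\ref{def:reduced-taskset-constrained} comes into play. When $M$ is even, $D_i/(3V) = M/2$ is an integer, so $\tau_1$ contributes exactly $MV/2$ and $W_i(D_i) \le MV + MV/2 = 3MV/2 = D_i$. When $M$ is odd, $D_i/(3V) = (3M+1)/6 = M/2 + 1/6$, whose ceiling is $(M+1)/2$, so $\tau_1$ contributes $(M+1)V/2$ and $W_i(D_i) \le MV + (M+1)V/2 = (3M+1)V/2 = D_i$. In both cases $W_i(D_i) \le D_i$, which by monotonicity of $W_i$ implies $R_i \le D_i$, completing the lemma.

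The main obstacle is purely arithmetic: checking that the boundary case works tightly under both parities of $M$, which is precisely why the reduction defines $D_i$ with the small $V/2$ adjustment when $M$ is odd. No deeper structural argument is needed because these $3M+1$ tasks are ordinary sporadic tasks and the self-suspending $\tau_{3M+2}$ sits below them in priority.
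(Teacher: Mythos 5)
Your proposal is correct and follows essentially the same route as the paper: both treat $\tau_1,\ldots,\tau_{3M+1}$ as ordinary sporadic tasks unaffected by the lower-priority suspending task, apply standard time-demand analysis, note that each $\tau_j$ with $j\geq 2$ contributes at most one job because $D_i < 21MV$, bound $\sum_{j=2}^{i} v_j \leq MV$, and check $\lceil D_i/(3V)\rceil V$ by the same parity split on $M$, yielding $W_i(D_i)\leq D_i$ exactly as in the paper's computation.
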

\begin{proof}
  In FP scheduling, the segmented
  self-suspending task $\tau_{3M+2}$ in ${\bf T}^{red}$ has no impact
  on the schedule of the higher-priority tasks. Therefore, we can use
  the standard schedulability test for FP
  scheduling to verify their schedulability. The schedulability of
  task $\tau_1$ is obvious since $C_1 \leq D_1$. For
  $i=2,3,\ldots,3M+1$, task $\tau_i$ is schedulable under
  FP scheduling since $C_i + \sum_{j=1}^{i-1}
  \ceiling{\frac{D_i}{T_j}}C_j =
  \ceiling{\frac{D_i}{T_1}}C_1+\sum_{j=2}^{i} C_j \leq
  \ceiling{\frac{D_i}{3V}}V+MV = D_i$, where the last equality is due
  to $\ceiling{\frac{D_i}{3V}} = \ceiling{\frac{3MV/2}{3V}} = M/2$
  when $M$ is an even number and $\ceiling{\frac{D_i}{3V}} =
  \ceiling{\frac{3MV/2+V/2}{3V}} = (M+1)/2$ when $M$ is an add number.
\end{proof}

The worst-case response time of task $\tau_{3M+2}$
happens by using one of the release patterns with the conditions in
Lemma~\ref{lemma:wcrt-pattern-constrained}:
\begin{lemma}
  \label{lemma:wcrt-pattern-constrained}
  The worst-case response time of task $\tau_{3M+2}$ in ${\bf
    T}^{red}$ under FP scheduling happens under
  the following necessary conditions:
  \begin{enumerate}
  \item Task $\tau_{3M+2}$ releases a job at time $0$. This job
    requests the worst-case execution time per computation segment and
    suspends in each self-suspension interval exactly equal to its
    worst case.
  \item Task $\tau_1$ always releases one job together with each
    computation segment of the job (released at time $0$) of task
    $\tau_{3M+2}$, and releases the subsequent jobs strictly
    periodically with period $3V$ until a computation segment of task
    $\tau_{3M+2}$ finishes. Task $\tau_1$ never releases any job when
    task $\tau_{3M+2}$ suspends itself.
  \item For $i=2,3,\ldots,3M+1$, task $\tau_i$ only releases one job
    together with one of the $M$ computation segments of the job
    (released at time $0$) of task $\tau_{3M+2}$.
  \end{enumerate}
  All the jobs and all the computation segments are executed with their
  worst-case execution time specifications.
\end{lemma}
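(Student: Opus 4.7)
The plan is to derive the three conditions of the lemma by specializing Lemma~\ref{lemma:necessary-critical-wcrt} and Properties~\ref{property:p1}, \ref{property:p2}, and~\ref{property:p3} to the concrete parameters of ${\bf T}^{red}$.

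First I would observe that Condition~1 of the present lemma is an immediate consequence of Conditions~2 and~3 of Lemma~\ref{lemma:necessary-critical-wcrt}: every self-suspension interval of the analysed job $J$ is forced to its worst-case length $S_{3M+2}^j = 6V$, and every computation segment is executed for its worst-case time $C_{3M+2}^j = V+1$. Shifting the time origin so that $g_1 = 0$ is without loss of generality and yields the claimed release of $J$ at time $0$.

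Next, for Condition~2 I would apply Property~\ref{property:p2} to $\tau_1$. Because $T_1 = 3V$ and $S_{3M+2}^j = 6V \geq 3V$ for every $j = 1, \ldots, M-1$, all three cases of Property~\ref{property:p2} (for $j=1$, for $j=M$, and for $2 \leq j \leq M-1$) hold simultaneously, so $O_{1,j} = 0$ for each $j$. The same property forces $\tau_1$ to then release its subsequent jobs strictly periodically with period $3V$ until each computation segment of $J$ finishes. Combining this with Condition~1 of Lemma~\ref{lemma:necessary-critical-wcrt}, which prohibits higher-priority releases during self-suspension intervals, gives exactly Condition~2 of the present lemma.

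For Condition~3 I would apply Property~\ref{property:p3} to each $\tau_i$ with $2 \leq i \leq 3M+1$. The assumption $T_i \geq T_{3M+2}$ is immediate from $T_i = 21MV = T_{3M+2}$. The main technical obstacle is the auxiliary hypothesis $WCRT_{3M+2} \leq T_{3M+2}$ required by Property~\ref{property:p3}. I would dispose of it by a short dichotomy: if $WCRT_{3M+2} > T_{3M+2} = 21MV$, then \emph{a fortiori} $WCRT_{3M+2} > D_{3M+2} = 10MV + M - 7V$, so the task set is already unschedulable and the exact worst-case release pattern plays no role in the downstream reduction; we may therefore assume $WCRT_{3M+2} \leq T_{3M+2}$. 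Under this assumption Property~\ref{property:p3} guarantees that each $\tau_i$ with $i \geq 2$ releases at most one job during $[g_1, f_M)$, and Property~\ref{property:p1} forces this release to coincide with one of the $M$ computation segments of $J$, which is precisely Condition~3.
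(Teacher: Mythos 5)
Your proof is correct and follows essentially the same route as the paper: it specializes Lemma~\ref{lemma:necessary-critical-wcrt} together with Property~\ref{property:p2} for $\tau_1$ (since $T_1=3V\leq 6V=S_{3M+2}^j$) and Property~\ref{property:p3} for $\tau_2,\ldots,\tau_{3M+1}$ (since $T_i=T_{3M+2}=21MV$). Your explicit dichotomy on the hypothesis $WCRT_{3M+2}\leq T_{3M+2}$ is in fact slightly more careful than the paper's one-line invocation, which leaves that case to Lemma~\ref{lemma:necessary-critical-wcrt-larger} and the later remark that the stated conditions are only needed for detecting deadline misses rather than for the exact response time when the deadline is already missed.
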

\begin{proof}
For task $\tau_1$ in ${\bf T}^{red}$, since $T_1 < S_n^j$ for any $j=1,2,\ldots,M-1$, the
release pattern of task $\tau_1$ is independent from the computation
segments. This is formally proved in Property~\ref{property:p2} in Section~\ref{sec:proof-necessary}.
Moreover, since $T_i > D_n$
for $i=2,3,\ldots,3M+1$, such a higher-priority task $\tau_i$ in ${\bf T}^{red}$ only
releases one job together with one of the $M$ computation segments of
the job (under analysis) of task $\tau_n$. This is formally proved in
Property~\ref{property:p3} in Section~\ref{sec:proof-necessary}.
By putting all the above conditions together, we reach the conclusion for
Lemma~\ref{lemma:wcrt-pattern-constrained}. 
\end{proof}

For the $j$-th computation segment of task $\tau_{3M+2}$, suppose that
${\bf T}_j \subseteq \setof{\tau_2, \tau_3, \ldots, \tau_{3M+1}}$ is
the set of the tasks released together with $C_{3M+2}^j$ (under the
third condition in Lemma~\ref{lemma:wcrt-pattern-constrained}).  For
notational brevity, let $w_j$ be $\sum_{\tau_i \in {\bf T}_j} C_i$. By
definition, $w_j$ is a non-negative integer.  Together with the second
condition in Lemma~\ref{lemma:wcrt-pattern-constrained}, we can use
the standard time demand analysis to analyze the worst-case response
time $R_j$ of the $j$-th computation segment of task $\tau_{3M+2}$
(after it is released) under the higher-priority interference due to
$\setof{\tau_1} \cup {\bf T}_j$.
The response time $R_j$ of a computation segment $C_{3M+2}^j$ is defined as
the finishing time of the computation segment minus the arrival time
of the computation segment.

For a given task set ${\bf T}_j$ (i.e., a given non-negative integer $w_j$),
$R_j$ is the minimum $t$ with  $t > 0$  such that
\[
C_{3M+2}^j + (\sum_{\tau_i \in {\bf T}_j} C_i) +\ceiling{\frac{t}{T_1}}C_1=  V + 1 + w_j + \ceiling{\frac{t}{3V}}V = t.
\]
Since $R_j$ only depends on
the non-negative integer $w_j$, we use $R(w_j)$ to represent $R_j$ for
a given ${\bf T}_j$.
We know that $V
+ 1 + w_j + \ceiling{\frac{t}{3V}}V = t$ happens with $\ell\cdot 3V < t
\leq (\ell+1)\cdot 3V$ for a certain non-negative integer $\ell$.  That is,
$V + 1 + w_j + \ceiling{\frac{t}{3V}}V > t$ when $t$ is $\ell\cdot 3V$
and $V + 1 + w_j + \ceiling{\frac{t}{3V}}V \leq t$ when $t$ is
$(\ell+1)3V$. We know that $\ell$ is
$\ceiling{\frac{V+1+w_j}{2V}}-1$. Moreover,
\begin{align*}
 R(w_j) &=  \ell\cdot 3V + V+(V+1+w_j - \ell\cdot 2V) \\
  &= 2V+1+w_j + \ell\cdot V \\
  &= V+1+w_j + \ceiling{\frac{V+1+w_j}{2V}} V.
\end{align*}
This leads to three cases that are of interest:
\begin{equation}
  \label{eq:response-time-np-hard}
R(w_j) =
\begin{cases}
  2V+1+w_j & \mbox{ if }w_j \leq V -1\\ 
  4V+1 & \mbox{ if } w_j = V \\ 
  V+1+w_j + \ceiling{\frac{V+1+w_j}{2V}} V& \mbox{ if } w_j > V
\end{cases}
\end{equation}
For example, if $w_j = 3V-1$, then $R(w_j)$ is $6V$; if $w_j$ is $3V$,
then $R(w_j)$ is $7V+1$. 

With the above discussions, we can now conclude that the unique
condition when task $\tau_{3M+2}$ misses its deadline in the following
lemma.

\begin{lemma}
  \label{lemma:schedulability-equivalent}
  Suppose that ${\bf T}_j \subseteq \setof{\tau_2, \tau_3, \ldots,
    \tau_{3M+1}}$ and ${\bf T}_i \cap {\bf T}_j = \emptyset$ when $i
  \neq j$. Let $w_j = \sum_{\tau_i \in {\bf T}_j} C_i$.  If a task
  partition ${\bf T}_1, {\bf T}_2, \ldots, {\bf T}_M$ exists such that
  $\sum_{j=1}^{M} R(w_j) > M(4V+1)-V$ with $R(w_j)$ defined in
  Eq.~\eqref{eq:response-time-np-hard}, then task $\tau_{3M+2}$ misses
  its deadline in the worst case; otherwise, task $\tau_{3M+2}$ always
  meets its deadline.
\end{lemma}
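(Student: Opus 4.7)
The overall idea is to translate the worst-case response-time question for $\tau_{3M+2}$ into a discrete optimization over partitions of $\{\tau_2, \ldots, \tau_{3M+1}\}$ into $M$ groups ${\bf T}_1, \ldots, {\bf T}_M$. By Lemma~\ref{lemma:wcrt-pattern-constrained}, every worst-case release pattern has $\tau_{3M+2}$ releasing a job at time $0$, $\tau_1$ arriving at the start of every segment and again every $T_1 = 3V$ units until that segment completes, and each $\tau_i$ ($i \geq 2$) arriving together with exactly one computation segment of $\tau_{3M+2}$; the latter assignment is exactly the partition. Condition~2 of Lemma~\ref{lemma:necessary-critical-wcrt} forces every suspension interval to have its worst-case length $S_{3M+2}^j = 6V$. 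Because $T_1 = 3V \leq 6V = S_{3M+2}^j$ for every $j$, the interference of $\tau_1$ decomposes segment by segment, so the standard time-demand fixed point inside the $j$-th segment returns exactly $R(w_j)$ as given by Eq.~\eqref{eq:response-time-np-hard}, where $w_j = \sum_{\tau_i \in {\bf T}_j} C_i$. Adding the suspensions, the job's total response time is $\sum_{j=1}^M R(w_j) + 6V(M-1)$, which exceeds $D_{3M+2} = 10MV + M - 7V = M(4V+1) - V + 6V(M-1)$ precisely when $\sum R(w_j) > M(4V+1) - V$.

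For the ``if'' direction, I will take a partition achieving $\sum R(w_j) > M(4V+1) - V$ and build the explicit release pattern described above. Since $T_i = 21MV$ is larger than any response window we will encounter, each $\tau_i$ in ${\bf T}_j$ releases only the single job at $g_j$, so the pattern is valid; the per-segment fixed-point computation then certifies $f_j - g_j = R(w_j)$, yielding total response time above $D_{3M+2}$ and hence a deadline miss. For the converse I will start from a schedule in which $\tau_{3M+2}$ misses its deadline, invoke Lemma~\ref{lemma:wcrt-pattern-constrained} (or, if $WCRT_{3M+2} > T_{3M+2}$, its counterpart Lemma~\ref{lemma:necessary-critical-wcrt-larger}) to obtain a release pattern with the required structure, read off the partition ${\bf T}_1, \ldots, {\bf T}_M$ by grouping each $\tau_i$ with its accompanying segment, and apply the same decomposition to conclude $\sum R(w_j) > M(4V+1) - V$.

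The step I expect to require the most care is justifying that the structural description supplied by Lemma~\ref{lemma:wcrt-pattern-constrained} still applies in the regime where $WCRT_{3M+2}$ is not \emph{a priori} bounded by $T_{3M+2}$, since Property~\ref{property:p3} (used in that lemma's proof) is stated under $WCRT_n \leq T_n$. I will handle this by focusing on the \emph{first} deadline-missing job and observing that its response window has length at most $D_{3M+2} + \varepsilon < T_i = 21MV$ for every higher-priority $\tau_i$ ($i \geq 2$), so each such task still releases at most one interfering job; combined with Property~\ref{property:p2} applied to $\tau_1$ (valid since $T_1 \leq S_{3M+2}^j$) and Property~\ref{property:p1} placing that single job at some $g_j$, this recovers the partition structure without needing any \emph{a priori} bound on $WCRT_{3M+2}$. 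The rest of the argument is routine algebra: a case split over the three regimes of Eq.~\eqref{eq:response-time-np-hard} to compute $R(w_j)$ and the simplification of $\sum R(w_j) + 6V(M-1) > D_{3M+2}$ into the stated inequality.
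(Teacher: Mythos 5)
Your proposal is correct and takes essentially the same route as the paper: the paper's own proof is a one-line appeal to Lemma~\ref{lemma:wcrt-pattern-constrained}, reading the partition off the structured release pattern and observing that the response time is $\sum_{j=1}^{M} R(w_j) + 6V(M-1)$, which exceeds $D_{3M+2}$ exactly when $\sum_{j=1}^{M} R(w_j) > M(4V+1)-V$. Your additional care about the first deadline-missing job and the window being shorter than $T_i=21MV$ (so Property~\ref{property:p3}'s hypothesis is not needed a priori) is a sound elaboration of a point the paper glosses over, not a different approach.
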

\begin{proof}
  By Lemma~\ref{lemma:wcrt-pattern-constrained}, task $\tau_{3M+2}$ in
  ${\bf T}^{red}$ is not schedulable under the fixed-priority
  preemptive scheduling if and only if there exists a task partition ${\bf T}_1,
  {\bf T}_2, \ldots, {\bf T}_M$ such that $\sum_{j=1}^{M} R(w_j) +
  \sum_{j=1}^{M-1} S_{3M+2}^j = 6(M-1)V + \sum_{j=1}^{M} R(w_j) >
  D_{3M+2} = M (4V + 1) + 6V(M-1) -V$. This concludes the proof.
\end{proof}

Instead of investigating the combinations of the task partitions, we
analyze the corresponding total worst-case response time
$\sum_{j=1}^{M} R(w_j)$ for the $M$ computation segments of task
$\tau_{3M+2}$ (by excluding the self-suspension time) by considering
different \emph{non-negative integer assignments} $w_1, w_2, \ldots,
w_{M}$ with $\sum_{i=1}^{M} w_i = MV$ and $w_i \geq 0$ in the following
lemmas.

\begin{lemma}
  \label{lemma:wcrt-all-the-same}
  If $w_1 = w_2 = \cdots = w_{M} = V$, then 
\[
  \sum_{j=1}^{M} R(w_j) = M(4V+1),
\]
where $R(w_j)$ is defined in Eq.~\eqref{eq:response-time-np-hard}.
\end{lemma}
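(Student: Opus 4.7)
The claim is essentially a direct substitution into the piecewise definition in Eq.~\eqref{eq:response-time-np-hard}, so the plan is minimal.

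First, I would observe that the middle case of Eq.~\eqref{eq:response-time-np-hard} tells us that whenever $w_j = V$, we have $R(w_j) = 4V+1$. Since by hypothesis every $w_j$ equals $V$, each of the $M$ summands in $\sum_{j=1}^{M} R(w_j)$ is identically $4V+1$, and the total is $M(4V+1)$.

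If desired for pedagogical clarity, I would briefly re-derive this middle case by plugging $w_j = V$ into the general identity
\[
R(w_j) = V + 1 + w_j + \left\lceil \tfrac{V+1+w_j}{2V} \right\rceil V,
\]
noting that $\lceil (2V+1)/(2V) \rceil = 2$, which gives $R(V) = V + 1 + V + 2V = 4V+1$, matching the case table. No obstacle is anticipated here; the lemma is stated only as an explicit anchor value so that subsequent lemmas can compare the effect of perturbing the assignment $w_1 = \cdots = w_M = V$ against the threshold $M(4V+1) - V$ appearing in Lemma~\ref{lemma:schedulability-equivalent}.
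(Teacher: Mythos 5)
Your proof is correct and follows exactly the paper's own (one-line) argument: direct substitution of $w_j = V$ into the middle case of Eq.~\eqref{eq:response-time-np-hard} gives $R(V)=4V+1$ for each of the $M$ summands, hence the total $M(4V+1)$. The optional re-derivation via $\left\lceil (2V+1)/(2V)\right\rceil = 2$ is a harmless and accurate addition.
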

\begin{proof}
  This comes directly by Eq.~\eqref{eq:response-time-np-hard}.
\end{proof}

\begin{lemma}
  \label{lemma:wcrt-one-different}
  For any non-negative integer assignment for $w_1, w_2, \ldots, w_M$
  with $\sum_{i=1}^{M} w_i = MV$, if there exists a certain index $j$
  with $w_j \neq V$, then
\[
  \sum_{j=1}^{M} R(w_j) \leq  M(4V+1)-V,
\]
where $R(w_j)$ is defined in Eq.~\eqref{eq:response-time-np-hard}.
\end{lemma}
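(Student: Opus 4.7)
The plan is to consolidate the three-case formula for $R(w)$ in Eq.~\eqref{eq:response-time-np-hard} into a single closed form that exposes its structure. A direct verification shows that on each interval $[(2k-1)V,(2k+1)V-1]$ with $k\geq 1$ (and $[0,V-1]$ for $k=0$), we have $R(w)=(k+2)V+1+w$, i.e.\ $R(w)=2V+1+w+k(w)\cdot V$, where $k(w)$ counts the ``jumps of size $V$'' contributed by the ceiling. Explicitly, $k(w)=0$ for $w\leq V-1$ and $k(w)=1+\lfloor (w-V)/(2V)\rfloor$ for $w\geq V$.

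Summing over the $M$ computation segments and using the constraint $\sum_j w_j=MV$, I obtain $\sum_{j=1}^{M}R(w_j)=M(2V+1)+MV+V\sum_{j=1}^{M}k(w_j)=M(3V+1)+V\sum_{j=1}^{M}k(w_j)$. The desired bound $M(4V+1)-V$ is therefore \emph{equivalent} to $\sum_{j=1}^{M}k(w_j)\leq M-1$. The uniform choice $w_1=\cdots=w_M=V$ gives $\sum_j k(w_j)=M$, so the substance of the lemma is exactly that any other feasible integer assignment forces the jump-count sum to drop by at least one.

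The main obstacle is establishing this combinatorial drop. Let $c_j=w_j-V\in[-V,+\infty)$, so $\sum_j c_j=0$, and partition the indices into $N=\{j:c_j<0\}$, $Z=\{j:c_j=0\}$, $P=\{j:c_j>0\}$. The hypothesis that some $w_j\neq V$ together with $\sum c_j=0$ forces $|N|\geq 1$ and $|P|\geq 1$. Since $k(w_j)=0$ on $N$, $k(w_j)=1$ on $Z$, and $k(w_j)=1+\lfloor c_j/(2V)\rfloor$ on $P$, the jump count becomes $\sum_j k(w_j)=(M-|N|)+\sum_{j\in P}\lfloor c_j/(2V)\rfloor$. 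It thus suffices to prove $\sum_{j\in P}\lfloor c_j/(2V)\rfloor\leq|N|-1$.

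Let $A=\sum_{j\in P}c_j=-\sum_{j\in N}c_j$. Because $c_j\geq -V$ for $j\in N$, I get $A\leq V|N|$. If $|N|\geq 2$, then $\sum_{j\in P}\lfloor c_j/(2V)\rfloor\leq A/(2V)\leq|N|/2\leq|N|-1$. If $|N|=1$, then $A\leq V<2V$, and since $A$ upper-bounds every single $c_j$ with $j\in P$, each floor equals $0$, matching $|N|-1=0$. In either case $\sum_j k(w_j)\leq M-1$, giving $\sum_j R(w_j)\leq M(3V+1)+V(M-1)=M(4V+1)-V$ and completing the proof. The delicate point is the boundary case $|N|=1$, where one has to exploit the bound $c_j\geq -V$ (i.e., the nonnegativity of $w_j$) rather than a purely averaging argument.
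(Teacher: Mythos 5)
Your proof is correct, but it takes a genuinely different route from the paper. You first collapse the three-case formula into the single identity $R(w)=2V+1+w+k(w)\,V$ with $k(w)=0$ for $w\le V-1$ and $k(w)=1+\lfloor (w-V)/(2V)\rfloor$ for $w\ge V$ (valid because $w$ and $V$ are integers, via $\lceil (r+1)/(2V)\rceil=\lfloor r/(2V)\rfloor+1$), which together with $\sum_j w_j=MV$ reduces the lemma to the purely combinatorial claim $\sum_j k(w_j)\le M-1$; you then prove $\sum_{j\in P}\lfloor c_j/(2V)\rfloor\le |N|-1$ by an averaging bound for $|N|\ge 2$ and a separate boundary argument for $|N|=1$ using $c_j\ge -V$. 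The paper instead argues by an iterative exchange procedure: it normalizes all $w_j>2V$ down (showing the total response time does not decrease), then repeatedly raises each deficient $w_i<V$ to $V$ while compensating within a subset of the surplus indices, showing each such step increases the total by exactly $V$, and concludes $\sum_j R(w_j)\le M(4V+1)-|{\bf X}|V$ where ${\bf X}$ is the set of deficient indices. Your static counting argument is more self-contained and sidesteps the bookkeeping and termination details of the exchange procedure (which the paper only sketches), at the price of yielding the bound $M(4V+1)-V$ (or $-\lceil |N|/2\rceil V$) rather than the paper's slightly sharper $-|{\bf X}|V$; since only $-V$ is needed for the $co{\cal NP}$-hardness reduction, this loss is immaterial.
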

\begin{proof}
  Let ${\bf X}$ be the set of indexes such that $0 \leq w_j < V$ for
  any $j \in {\bf X}$. Similarly, let ${\bf Y}$ be the set of indexes
  such that $V < w_j$ for any $j \in {\bf Y}$.  If $j \notin {\bf
    X}\cup{\bf Y}$, then $w_j$ is $V$.

  If there exists $j$ in ${\bf Y}$ with $w_j > 2V$, since
  $\sum_{i=1}^{M} w_i = MV$, there must exist an index $i$ in ${\bf
    X}$ with $w_i < V$. We can increase $w_i$ to $w_i'=V$, which increases
  the worst-case response time $R(w_i)$ by $2V-w_i$ (i.e., from
  $2V+1+w_i$ to $4V+1$). Simultaneously, we
  reduce $w_j$ to $w_j' = w_j - (V-w_i) > V$. Therefore,
  $w_i+w_j=w_i'+w_j'$. Moreover, the reduction of $w_j$ to $w_j'$ also reduces the
  worst-case response time $R(w_j)$ by case 1) $V-w_i$ if
  $\ceiling{\frac{V+1+w_j'}{2V}}$ is equal to $\ceiling{\frac{V+1+w_j}{2V}}$, and by case 2) $V-w_i+V$ if $\ceiling{\frac{V+1+w_j'}{2V}}$ is not equal to $\ceiling{\frac{V+1+w_j}{2V}}$. In both
  cases, we can easily see that the worst-case response time is not
  decreased in the new integer assignment.
 Moreover, the index $j$
  remains in ${\bf Y}$ and the index $i$ is removed from set ${\bf
    X}$. We repeat the above step until all the indexes $j$ in ${\bf
    Y}$ are with $w_j \leq 2V$.

  It is clear that ${\bf X}$ and ${\bf Y}$ are both non-empty after the above step.
  For the rest of the proof, let ${\bf X}$ and ${\bf Y}$ be defined after finishing the above step. 
  Therefore, the condition 
  $w_j \leq 2V$ holds for any $j \in {\bf Y}$. 
 Due to the pigeon-hole
  principle, when ${\bf Y}$ is not an empty set, ${\bf X}$ is also not
  an empty set.  Moreover, for an element $i$ in ${\bf X}$, there must
  be a subset ${\bf Y}'\subseteq {\bf Y}$ and an index $\ell \in {\bf
    Y}'$ such that
  \[ \sum_{j \in {\bf Y}'}
  (w_j-V) \geq V-w_i > \sum_{j \in {\bf Y}'\setminus\setof{\ell}}  (w_j-V).
  \]
  That is, we want to adjust $w_i$ to $V$ (i.e., $w_i$ is increased by
  $V-w_i$), and the set ${\bf Y}'\setminus\setof{\ell}$ is not
  enough to match the integer adjustment $V-w_i$ and the set ${\bf
    Y}'$ is enough to match the integer adjustment $V-w_i$.  We now
  increase $w_i$ to $V$, which increases the worst-case response time
  $R(w_i)$ by $2V-w_i$.  Simultaneously, we reduce $w_j$ to $V$ for
  every $j \in {\bf Y}' \setminus \setof{\ell}$ and reduce $w_\ell$ to
  $w_\ell'=w_\ell - (V-w_i-\sum_{j \in {\bf Y}'\setminus\setof{\ell}}
  (w_j-V))$. Since $V < w_j \leq 2V$ for any $j \in {\bf Y}'$ before
  the adjustment, the adjustment reduces the worst-case response time
  $R(w_j)$ by $w_j - V$ if $j\neq \ell$ and reduces $R(w_\ell)$ by
  $w_\ell - w'_\ell$.  Therefore, the adjustment reduces $\sum_{j \in
    {\bf Y}'} R(w_j)$ by exactly $V-w_i$. Therefore, the adjustment in
  this step to change $w_i$ in ${\bf X}$ and $w_j$ in ${\bf Y}'$ increases the
  overall worst-case response time by exactly $V$ time units.

  By adjusting with the above procedure repeatedly, we will reach the
  integer assignment $w_1 = w_2 = \cdots = w_{M} = V$ with bounded
  increase of the worst-case response time.  As a result, we can
  conclude that $\sum_{j=1}^{M} R(w_j) \leq M(4V+1)-|{\bf X}|V$. By the
  assumption $\sum_{i=1}^{M} w_i = MV$ and the existence of $w_j \neq
  V$ for some $j$, we know that $|{\bf X}|$ must be at least
  $1$. Therefore, we reach the conclusion.
\end{proof}

We use an example to illustrate how the procedure in Lemma
  \ref{lemma:wcrt-one-different} operates. Suppose that $w_1=0, w_2 =
  3.5V, w_3=0.4V, w_4=0.6V, w_5=1.5V, w_6=0$ when $M=6$ and $V$ is an
  integer multiple of $10$. We will start from ${\bf
    X}=\setof{1,3,4,6}$ and ${\bf Y} = \setof{2,5}$. As shown in
  Table~\ref{tab:example-lemma-merge}, the operation makes
  $\sum_{j=1}^{6}R(w_j)$ increase. Note that the conclusion
  $\sum_{j=1}^{M} R(w_j) \leq M(4V+1)-|{\bf X}|V$ in
  Lemma~\ref{lemma:wcrt-one-different} was for $|{\bf X}|=\setof{3,4}$
  in this example when $w_j < 2V$ for any $j \in {\bf Y}$.

  \begin{table}[t]
    \centering\scalebox{0.7}{
    \begin{tabular}{|c|c|c|c|c|c|c|c|c|c|c|c||c|c||c|c||c|}
      \hline
        $w_1$ & $R(w_1)$
      &   $w_2$ & $R(w_2)$
      &   $w_3$ & $R(w_3)$
      &   $w_4$ & $R(w_4)$
      &   $w_5$ & $R(w_5)$
      &   $w_6$ & $R(w_6)$
      &   ${\bf X}$ & ${\bf Y}$ & $\sum_{j=1}^{6}R(w_j)$\\
\hline
0 & $2V+1$ & $3.5V$ & $7.5V+1$ & $0.4V$ & $2.4V+1$ & $0.6V$ & $2.6V+1$
& $1.5V$ & $4.5V+1$ & $0$ & $2V+1$ & $\setof{1,3,4,6}$ &
$\setof{2,5}$ & $21V+6$\\
\hline
$V$ & $4V+1$ & $2.5V$ & $5.5V+1$ & \ditto& \ditto& \ditto& \ditto 
& \ditto & \ditto & \ditto & \ditto & $\setof{3,4,6}$ & $\setof{2,5}$
& $21V+6$\\
\hline
\ditto & \ditto & $1.5V$ & $4.5V+1$ & \ditto& \ditto& \ditto& \ditto 
& \ditto & \ditto & $V$ & $4V+1$ & $\setof{3,4}$ & $\setof{2,5}$
& $22V+6$\\
\hline
\ditto & \ditto & $V$ & $4V+1$ & $V$ & $4V+1$& \ditto& \ditto 
& $1.4V$ & $4.4V+1$& \ditto & \ditto & $\setof{4}$ & $\setof{5}$
& $23V+6$\\
\hline
\ditto & \ditto & \ditto & \ditto & \ditto & \ditto & $V$ & $4V+1$
& $1V$ & $4V+1$& \ditto & \ditto & $\emptyset$ & $\emptyset$
& $24V+6$\\
\hline
    \end{tabular}}
    \caption{An example of Lemma~\ref{lemma:wcrt-one-different}}
    \label{tab:example-lemma-merge}
  \end{table}

We can now conclude the $co{\cal NP}$-hardness.

\begin{theorem}
\label{thm:conp-hard-constrained}
The schedulability analysis for FP scheduling
even with only one segmented self-suspending task as the
lowest-priority task in the sporadic task system is $co{\cal NP}$-hard
in the strong sense, when the number of self-suspending intervals in
the self-suspending task is larger than or equal to $2$ and $D_i \leq
T_i$ for every task $\tau_i$.
\end{theorem}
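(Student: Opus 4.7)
The plan is to assemble the previous lemmas into a strongly polynomial reduction from the (complement of the) 3-PARTITION problem to the schedulability test for $\mathbf{T}^{red}$ under the specified FP preemptive scheduling, and then conclude $co\mathcal{NP}$-hardness in the strong sense.

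First, I would verify that the reduction in Definition~\ref{def:reduced-taskset-constrained} is strongly polynomial. The task system $\mathbf{T}^{red}$ consists of $n = 3MV + 2$ tasks, and every period, deadline, execution time, and self-suspension time is bounded by a fixed-degree polynomial in $M$ and $V$ (for instance, $T_{3M+2} = 21MV$ and $D_{3M+2} = 10MV + M - 7V$). Since 3-PARTITION is $\mathcal{NP}$-hard in the strong sense, all input numbers $v_i$ and $V$ are assumed to be encoded in unary (or equivalently, bounded by a polynomial in the input length), so the same holds for every parameter in $\mathbf{T}^{red}$, and the construction can be performed in polynomial time.

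Next, I would spell out the equivalence between a YES-instance of 3-PARTITION and an unschedulable $\mathbf{T}^{red}$. By Lemma~\ref{lemma:feasibility-sporadic-tasks}, tasks $\tau_1, \tau_2, \ldots, \tau_{3M+1}$ always meet their deadlines, so schedulability reduces to that of $\tau_{3M+2}$. By Lemma~\ref{lemma:wcrt-pattern-constrained}, the worst case for $\tau_{3M+2}$ arises from distributing the higher-priority tasks $\tau_2, \ldots, \tau_{3M+1}$ among the $M$ computation segments via disjoint subsets $\mathbf{T}_1, \ldots, \mathbf{T}_M$ with workloads $w_j = \sum_{\tau_i \in \mathbf{T}_j} C_i$, and $\sum_j w_j = MV$. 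By Lemma~\ref{lemma:schedulability-equivalent}, $\tau_{3M+2}$ misses its deadline in the worst case iff some such partition achieves $\sum_{j=1}^M R(w_j) > M(4V+1) - V$.

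If 3-PARTITION has a YES answer, then there exists a partition with $w_j = V$ for every $j$, so by Lemma~\ref{lemma:wcrt-all-the-same}, $\sum_j R(w_j) = M(4V+1) > M(4V+1) - V$, and $\tau_{3M+2}$ is unschedulable. Conversely, if 3-PARTITION has a NO answer, then for every partition some $w_j \neq V$, and Lemma~\ref{lemma:wcrt-one-different} gives $\sum_j R(w_j) \leq M(4V+1) - V$ for all partitions; hence $\tau_{3M+2}$ meets its deadline and $\mathbf{T}^{red}$ is schedulable. Thus $\mathbf{T}^{red}$ is schedulable iff the 3-PARTITION instance is a NO-instance, which reduces the complement of 3-PARTITION in strongly polynomial time to the schedulability test.

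The bulk of the technical work (Lemmas~\ref{lemma:wcrt-pattern-constrained}--\ref{lemma:wcrt-one-different}) has already been done; the only remaining obstacle is the bookkeeping needed to certify that the reduction is strongly polynomial and that every task in $\mathbf{T}^{red}$ satisfies $D_i \leq T_i$, both of which are straightforward from the construction. The conclusion that the schedulability test is $co\mathcal{NP}$-hard in the strong sense then follows immediately from the known strong $\mathcal{NP}$-completeness of 3-PARTITION for $M \geq 3$.
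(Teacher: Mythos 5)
Your proposal is correct and follows essentially the same route as the paper: the polynomial-time reduction of Definition~\ref{def:reduced-taskset-constrained} combined with Lemmas~\ref{lemma:wcrt-pattern-constrained}, \ref{lemma:schedulability-equivalent}, \ref{lemma:wcrt-all-the-same}, and~\ref{lemma:wcrt-one-different} to show that $\tau_{3M+2}$ is unschedulable if and only if the 3-PARTITION instance is a YES-instance. Your additional bookkeeping (explicitly invoking Lemma~\ref{lemma:feasibility-sporadic-tasks} and noting that all parameters are polynomially bounded, so the reduction is strongly polynomial) only makes explicit what the paper leaves implicit.
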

\begin{proof}
  The reduction in Definition~\ref{def:reduced-taskset-constrained}
  requires polynomial time. Moreover, by
  Lemmas~\ref{lemma:wcrt-pattern-constrained},~\ref{lemma:schedulability-equivalent},~\ref{lemma:wcrt-all-the-same},~and~\ref{lemma:wcrt-one-different},
  a feasible solution of the 3-PARTITION problem for the input
  instance exists if and only if task $\tau_{3M+2}$ is not schedulable
  by the FP scheduling when $M \geq 3$. Therefore, this concludes the proof.
\end{proof}

\begin{corollary}
  \label{cor:conp-hard-constrained-scheduler-design}
  Validating whether there exists a feasible priority assignment is
  $co{\cal NP}$-hard in the strong sense for constrained-deadline
  segmented self-suspending task systems.
\end{corollary}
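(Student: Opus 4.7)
My plan is to re-use the reduction ${\bf T}^{red}$ from Definition~\ref{def:reduced-taskset-constrained} without any modification and to argue that, on this particular instance, the existence of \emph{any} feasible priority assignment collapses to the feasibility of the single priority assignment already analyzed in Theorem~\ref{thm:conp-hard-constrained}. Once that collapse is established, the corollary follows immediately: a feasible priority assignment for ${\bf T}^{red}$ exists if and only if the fixed priority assignment of Definition~\ref{def:reduced-taskset-constrained} is feasible, which by Theorem~\ref{thm:conp-hard-constrained} is equivalent to the given 3-PARTITION instance having no solution.

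The collapse would rest on two structural observations about any candidate feasible assignment. First, $\tau_1$ must occupy the top. Because $C_1 = D_1 = V$, under a synchronous release any task placed above $\tau_1$ contributes strictly positive delay before $\tau_1$ can finish: at least $v_i > V/4$ from any $\tau_i$ with $i \in \{2,\ldots,3M+1\}$, and $C_{3M+2}^1 = V+1$ from $\tau_{3M+2}$; either way the response time of $\tau_1$ strictly exceeds $V=D_1$. Second, $\tau_{3M+2}$ must occupy the bottom. Let $\tau_\ell$ denote whichever task of $\{\tau_2,\ldots,\tau_{3M+1}\}$ is placed lowest inside that block. The demand computation inside the proof of Lemma~\ref{lemma:feasibility-sporadic-tasks} is symmetric across the middle tasks (they share the same $T_i$ and the same $D_i$, and $\sum_{i=2}^{3M+1} v_i = MV$), so the cumulative demand at $t=D_\ell$ equals $\ceiling{D_\ell/(3V)} V + MV = D_\ell$ exactly, no matter which middle task happens to be $\tau_\ell$. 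Therefore any positive additional demand inside $[0,D_\ell]$ immediately forces a deadline miss for $\tau_\ell$. If $\tau_{3M+2}$ were ever placed above $\tau_\ell$, then under a synchronous release its first segment would execute in $[V,2V+1] \subseteq [0,D_\ell]$ (valid since $D_\ell \geq 3MV/2 \geq 9V/2 > 2V+1$ follows from $M\geq 3$ and $V\geq 3$) and contribute the full $V+1 > 0$ units of extra demand to $\tau_\ell$'s window.

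Putting the two observations together, every feasible priority assignment must place $\tau_1$ at the top and $\tau_{3M+2}$ at the bottom; the internal ordering of $\tau_2,\ldots,\tau_{3M+1}$ is immaterial to the schedulability of $\tau_{3M+2}$ because Lemma~\ref{lemma:wcrt-pattern-constrained} guarantees that each such task interferes with at most one computation segment of $\tau_{3M+2}$ regardless of its relative priority inside the block. Hence the priority-assignment feasibility question collapses to the feasibility of the single assignment of Definition~\ref{def:reduced-taskset-constrained}, and Theorem~\ref{thm:conp-hard-constrained} delivers the strongly $co{\cal NP}$-hardness. The main obstacle I anticipate is the second structural observation, namely justifying the ``$V+1$ extra demand'' cleanly despite the self-suspension of $\tau_{3M+2}$, which in general complicates the response-time analysis of lower-priority tasks. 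The saving grace is that I need only a lower bound on interference, and the synchronous release forces the first segment of $\tau_{3M+2}$ to execute entirely before any suspension can occur, so $\tau_\ell$ is unambiguously delayed by the full $V+1$ time units without any delicate reasoning about how later segments interleave with lower-priority work.
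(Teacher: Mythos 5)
Your proposal is correct and follows essentially the same route as the paper: the paper's one-line proof rests on exactly the observation you develop, namely that in ${\bf T}^{red}$ any feasible priority assignment must place $\tau_1$ highest and $\tau_{3M+2}$ lowest (with the internal order of $\tau_2,\ldots,\tau_{3M+1}$ immaterial), so existence of a feasible assignment collapses to the single assignment analyzed in Theorem~\ref{thm:conp-hard-constrained}. You merely supply the details the paper leaves implicit; the only nitpick is that the first segment of $\tau_{3M+2}$ need not execute exactly in $[V,2V+1]$ if middle tasks sit above it, but as you note only the lower bound of $V+1$ extra higher-priority demand in $[0,D_\ell]$ matters, so the argument stands.
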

\begin{proof}
  This comes directly from Theorem~\ref{thm:conp-hard-constrained} and
  the only possible priority level for task $\tau_{3M+2}$ to be
  feasible in ${\bf T}^{red}$.
\end{proof}

\subsection{Implicit-Deadline Task Systems}

The $co{\cal NP}$-hardness in the strong sense for testing the
schedulability of task $\tau_n$ under FP scheduling can be easily
proved with the same input as in ${\bf T}^{red}$ by changing the
periods of the tasks as follows:
\begin{compactitem}
\item For task $\tau_1$, we set $D_1=3V, T_1=3V$.
\item For task $\tau_i$ with $i=2,..., 3M+1$, we set $T_i=D_i=10MV+M-7V$.
\item For task $\tau_{3M+2}$, we set $T_{3M+2} = D_{3M+2} =
  10MV+M-7V$.
\end{compactitem}
Assume that $\tau_{3M+2}$ is the lowest-priority task. It is not
difficult to see that all the conditions in
Lemma~\ref{lemma:wcrt-pattern-constrained} still hold for testing whether task $\tau_{3M+2}$ can meet its deadline or not (but not for the worst-case response time if task $\tau_{3M+2}$ misses the deadline). Therefore, the
schedulability analysis for FP scheduling even with only one segmented
self-suspending task as the lowest-priority task in the sporadic task
system is $co{\cal NP}$-hard in the strong sense, when the number of
self-suspending intervals in the self-suspending task is larger than or
equal to $2$ and $D_i = T_i$ for every task $\tau_i$.

However, the above argument does not hold if we assign task
$\tau_{3M+2}$ to the highest-priority level. Therefore, the above
proof does not support a similar conclusion for implicit-deadlien task
systems to that for constrained-deadline task systems in
Corollary~\ref{cor:conp-hard-constrained-scheduler-design}.

\section{MILP Approaches} 
\label{sec:proof-milp}

Even though the properties in
Lemma~\ref{lemma:necessary-critical-wcrt} provide the necessary
conditions for the worst-case response time, finding the worst-case release
pattern is in fact a hard problem as shown in the analysis in
Section~\ref{sec:computational-complexity-fp}. However, if we can
tolerate exponential time complexity, is there a strategy that can
find the worst-case pattern based on
Lemma~\ref{lemma:necessary-critical-wcrt} safely without performing
exhaustive searches? One possibility is to model the problem as an MILP,
which has been already presented by Nelissen et al. \cite{ecrts15nelissen}.

The worst-case response time analysis by Nelissen et
al. \cite{ecrts15nelissen} is based on the following mixed-integer
linear programming (MILP):
\begin{subequations}
  \label{eq:MILP}\footnotesize{
  \begin{align}
    \mbox{\bf maximize: } \;\;\;\;S_n + \sum_{j=1}^{m} R_j \label{eq:objective}\\
    \mbox{\bf subject to:} \qquad\qquad\qquad&\nonumber\\
     R_j = C_n^j + \sum_{i=1}^{n-1} N_{i,j} \times C_i, \qquad&
     \forall j=1,\ldots,m \label{eq:R_j}\\
     O_{i,j} \geq 0, \qquad& \forall i=1,\ldots,n-1, \forall
     j=1,\ldots,m \label{eq:O-positive}\\
     O_{i,j+1} \geq O_{i,j} + N_{i,j}\times T_i - (R_j + S_n^j), \qquad \label{eq:O-distance}&
    \forall i=1,\ldots,n-1, \forall j=1,\ldots,m-1\\
     0 \leq N_{i,j}  \leq \ceiling{\frac{R_j - O_{i,j}}{T_i}}, \qquad&
    \forall i=1,\ldots,n-1, \forall j=1,\ldots,m \label{eq:N-bounds}\\
     N_{i,j} \mbox{ is an integer }, \qquad&
    \forall i=1,\ldots,n-1, \forall j=1,\ldots,m \label{eq:N-integer}\\
  R_j \leq UB_{ss,j} \qquad &\forall j=1,2,\ldots,m, \label{eq:total-segment-upperbound} \\
  S_n + \sum_{j=1}^{m} R_j \leq UB_{ss} \label{eq:total-upperbound}\\
    \mbox{ Eq. \eqref{eq:Rj>rel-original} holds}.
  \end{align}}
\end{subequations}
In the above MILP, the objective function $S_n + \sum_{j=1}^{m} R_j$
is the worst-case response time of task $\tau_n$, where $R_j$ is a variable
(as a real number) that represents the response time of
the $j$-th computation segment $C_n^j$ of task $\tau_n$. The variable
$O_{i,j}$ defines the \emph{offset} of the first job of a
higher-priority task $\tau_i$ released no earlier than the arrival
time of the $j$-th computation segment $C_n^j$ of task $\tau_n$. That
is, if the arrival time of $C_n^j$ is $t_j$, then the first job of
task $\tau_i$ released at or after $t_j$ is at time $t_j +
O_{i,j}$. The integer variable $N_{i,j}$ defines the maximum number of
jobs of a higher-priority task $\tau_i$ that are released to
\emph{successfully interfere} in the computation segment $C_n^j$ of task
$\tau_n$.

The three additional constraints, expressed by Eq. (9), Eq. (11), and
Eq. (16), in the MILP in \cite{ecrts15nelissen} are expressed here by
Eq.~\eqref{eq:total-segment-upperbound}, Eq.~\eqref{eq:total-upperbound}, and
Eq.~\eqref{eq:Rj>rel-original}, respectively. Here, $UB_{ss}$ is defined as the
upper bound on the worst-case response time of task $\tau_n$, and
$UB_{ss,j}$ is defined as the upper bound on the worst-case response
time of the $j$-th computation segment of task $\tau_n$.  Later in
this section, we will show that the condition in Eq.~\eqref{eq:R_j}
may over-estimate the worst-case response time. Therefore, the
additional constraint (expressed by Eq. (16), in the MILP in
\cite{ecrts15nelissen}) is used to reduce the pessimism as follows: {\small
  \begin{equation}
    \label{eq:Rj>rel-original}
\forall i=1,2,\ldots,n-1, j=1,2,\ldots,m, \qquad R_j > rel_{i,j} + \sum_{\ell=1}^{n-1}\max\left\{0,
     \floor{\frac{O_{\ell,j}+N_{\ell,j}T_\ell - rel_{i,j}}{T_\ell}}C_\ell\right\},     
 \end{equation}}where $rel_{i,j} = O_{i,j} + (N_{i,j}-1) T_i$. This means that
the (total) execution time of all the higher-priority jobs (by tasks $\tau_1, \tau_2, \ldots, \tau_{n-1}$) released
after $rel_{i,j}$ should be less than $R_j-rel_{i,j}$.


Here, we first explain why the MILP by utilizing only the constraints
from Eq.~\eqref{eq:R_j} to Eq.~\eqref{eq:N-integer} is already a safe
(but \emph{not tight/exact}) result based on
Lemma~\ref{lemma:necessary-critical-wcrt}. Therefore, this also leads to
the motivation to examine the pessimism by different combinations of
the additional constraints Eq.~\eqref{eq:total-segment-upperbound},
Eq.~\eqref{eq:total-upperbound}, and Eq.~\eqref{eq:Rj>rel-original} in
Section~\ref{sec:gap}.

\subsection{MILP by Using Lemma~\ref{lemma:necessary-critical-wcrt}}

We only consider the release patterns of the tasks in ${\bf T}$, where
all the three conditions in Lemma~\ref{lemma:necessary-critical-wcrt}
hold.  Let $r_{i,j}$ be the arrival time of the first job of task
$\tau_i$ arrived after or at time $g_j$ in a concrete release pattern,
in which all the three conditions in
Lemma~\ref{lemma:necessary-critical-wcrt} hold. If task $\tau_i$ does
not release any job after or at time $g_j$, we set $r_{i,j}$ to
$\infty$.\footnote{With the discussions below, we will later set
  $r_{i,j}$ to $f_j + T_j$ for such a case (but not release any job of
  task $\tau_i$ at time $f_j+T_j$).} By the minimum inter-arrival time
constraint of task $\tau_i$, we know that task $\tau_i$ cannot release
any job in time interval $(r_{i,j+1}-T_i, r_{i,j+1})$. That is, in
this release pattern, there are \emph{at most}
$\floor{\frac{r_{i,j+1}-T_i - r_{i,j}}{T_i}}+1 \leq \frac{r_{i,j+1} -
  r_{i,j}}{T_i}$ jobs from task $\tau_i$ that can interfere in the
$j$-th computation segment of job $J$.

Let $N_{i,j}$ be the number of jobs of a higher-priority task $\tau_i$
released in time interval $[g_j, f_j)$ in this release pattern. By
definition, $N_{i,j}$ is a non-negative integer.  The maximum number
of jobs that task $\tau_i$ can release in time interval $[r_{i,j},
f_j)$ in this release pattern can be expressed by the following inequality:
\begin{equation}
0 \leq N_{i,j} \leq \max\left\{0, \ceiling{\frac{f_j-r_{i,j}}{T_i}}\right\}, \qquad \forall i=1,\ldots,n-1, j=1,\ldots,m.   \label{eq:N-ij-1-pre}  
\end{equation}
The reason to put $\max\left\{0,
  \ceiling{\frac{f_j-r_{i,j}}{T_i}}\right\}$ instead of only
$\ceiling{\frac{f_j-r_{i,j}}{T_i}}$ in the right-hand side of
Eq.~\eqref{eq:N-ij-1-pre} is to avoid the case that
$\ceiling{\frac{f_j-r_{i,j}}{T_i}} < 0$, which is possible if $r_{i,j}
> f_j + T_i$. 

There is one simple trick regarding to the setting of $r_{i,j}$. If
$r_{i,j} > f_j + T_i$, for this release pattern, we know that 1) task
$\tau_i$ does not release any job to interfere in the $j$-th
computation segment of job $J$ and 2) the number of jobs of task
$\tau_i$ that are released to interfere in the $(j-1)$-th computation segment
of job $J$ is purely dominated by $\max\left\{0,
  \ceiling{\frac{f_{j-1}-r_{i,j-1}}{T_i}}\right\}$.  Therefore, if
$r_{i,j} > f_j + T_i$, we can safely set $r_{i,j}$ to $f_j+T_i$
(but we do not change the release pattern to release a job of task
$\tau_i$ at time $f_j + T_i$ for such a case). With this, we can then rephrase
Eq.~\eqref{eq:N-ij-1-pre} into
\begin{equation}
0 \leq N_{i,j} \leq \ceiling{\frac{f_j-r_{i,j}}{T_i}}, \qquad \forall i=1,\ldots,n-1, j=1,\ldots,m.   \label{eq:N-ij-1}  
\end{equation}

By earlier discussions, we also have
\begin{equation}
  N_{i,j} \leq \frac{r_{i,j+1} - r_{i,j}}{T_i}, \qquad \forall i=1,\ldots,n-1, j=1,\ldots,m-1.   \label{eq:N-ij-2}  
\end{equation}

By Condition 1 and Condition 3 in Lemma~\ref{lemma:necessary-critical-wcrt}, we also know that
\begin{equation}
  \label{eq:f_j}
  f_j \leq g_j + C_n^j + \sum_{i=1}^{n-1} N_{i,j} \times C_i \qquad \forall j=1,2,\ldots, m.
\end{equation}

Without loss of generality, we can set $g_1$ to $0$. By Condition 2 in
Lemma~\ref{lemma:necessary-critical-wcrt}, we have
\begin{equation}
  \label{eq:g_j}
g_1 = 0  \mbox{ and }  g_j = f_{j-1} + S_n^{j-1} \qquad \forall j=2,3,\ldots, m.
\end{equation}

Now we can conclude the following theorem:

\begin{theorem}
  \label{thm:MILP-v1}
  Suppose that $g_j, f_j, r_{i,j}$ are variables of real numbers and
  $N_{i,j}$ are variables for non-negative integer numbers for
  $i=1,2,\ldots,n-1$ and for $j=1,2,\ldots,m$. The optimal solution of
  the following MILP is a safe upper bound on the worst-case response time of task
  $\tau_n$ if $WCRT_n \leq T_n$.
\begin{subequations}
  \label{eq:MILP-v1}{\small
  \begin{align}
&    \mbox{\bf maximize: } \;\;\;\;f_m \label{eq:objective-v1}\\
 &   \mbox{\bf subject to:} \nonumber\\
&\qquad\qquad\qquad    r_{i,j} \geq g_j,  \forall i=1,\ldots,n-1, \forall
     j=1,\ldots,m \label{eq:r-versus-g}\\
& \qquad\qquad\qquad   N_{i,j} \mbox{ is an integer },
    \forall i=1,\ldots,n-1, \forall j=1,\ldots,m \label{eq:N-integer-v1}\\
 & \qquad\qquad\qquad   \mbox{ and  Conditions in Eqs.~\eqref{eq:N-ij-1},~\eqref{eq:N-ij-2},~\eqref{eq:f_j},~\eqref{eq:g_j} hold}.\nonumber
  \end{align}}
\end{subequations}
\end{theorem}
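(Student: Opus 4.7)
The plan is to show that the MILP optimum is at least $WCRT_n$ by exhibiting, for any worst-case release pattern, a feasible assignment of the MILP variables whose objective $f_m$ equals $WCRT_n$; the desired ``safe upper bound'' conclusion then follows since the MILP maximizes. Under the hypothesis $WCRT_n \leq T_n$, Lemma~\ref{lemma:necessary-critical-wcrt} guarantees the existence of a worst-case release pattern in which Conditions~1, 2, and 3 all hold. Fix such a pattern together with the distinguished job $J$ from the opening of Section~\ref{sec:proof-necessary}, and shift time so that $g_1 = 0$.

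First I would read the variable values directly off the pattern: let $g_j$ and $f_j$ be the arrival and finishing times of the $j$-th computation segment of $J$; let $N_{i,j}$ be the number of jobs of $\tau_i$ released inside $[g_j, f_j)$; and let $r_{i,j}$ be the arrival time of the first job of $\tau_i$ at or after $g_j$, truncated to $f_j + T_i$ whenever no such job exists or the true value exceeds $f_j + T_i$, exactly as in the discussion preceding Eq.~\eqref{eq:N-ij-1}. From these definitions, Eq.~\eqref{eq:r-versus-g} and the integrality of $N_{i,j}$ are immediate.

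Next I would verify the four remaining constraints. Eq.~\eqref{eq:N-ij-1} follows from the minimum inter-arrival time of $\tau_i$ applied to the jobs counted in $[r_{i,j}, f_j)$, combined with the truncation convention that keeps the right-hand side non-negative. Eq.~\eqref{eq:N-ij-2} follows similarly from inter-arrival spacing between the last counted release in segment $j$ and the first in segment $j+1$. Eq.~\eqref{eq:g_j} is a direct restatement of Condition~2 under the normalization $g_1 = 0$. The key step is Eq.~\eqref{eq:f_j}: by Condition~1 the processor receives no higher-priority releases in any of the suspension intervals, and by Condition~3 each interfering job is executed for exactly its worst-case budget, so within $[g_j, f_j)$ the CPU runs only the $j$-th segment of $J$ or one of the $N_{i,j}$ interfering higher-priority jobs, giving $f_j - g_j \leq C_n^j + \sum_{i=1}^{n-1} N_{i,j} C_i$ (with equality whenever no interfering job spills over beyond $f_j$).

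With every constraint satisfied the assignment is feasible, and its objective $f_m$ equals the actual finishing time of the last segment of $J$, which under $g_1 = 0$ is precisely $WCRT_n$. I expect the main obstacle to be Eq.~\eqref{eq:f_j}: the argument must simultaneously invoke Condition~1 to rule out idle time in $[g_j, f_j)$ and handle the possibility that a job released inside $[g_j, f_j)$ executes partly in the following suspension interval; this is exactly why Eq.~\eqref{eq:f_j} must be stated as an inequality rather than an equality, and it also foreshadows why the resulting MILP is only a safe upper bound and not an exact analysis. A smaller but necessary subtlety is the truncation convention for $r_{i,j}$, which is what allows Eq.~\eqref{eq:N-ij-1} to be written without a $\max\{0,\cdot\}$ wrapper.
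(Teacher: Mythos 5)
Your proposal is correct and follows essentially the same route as the paper: the paper's one-line proof simply points back to the derivations of Eqs.~\eqref{eq:N-ij-1}--\eqref{eq:g_j}, which is exactly the argument you spell out---map a worst-case release pattern satisfying the three conditions of Lemma~\ref{lemma:necessary-critical-wcrt} to a feasible assignment of $g_j, f_j, r_{i,j}, N_{i,j}$ (with the same truncation convention for $r_{i,j}$) whose objective $f_m$ equals $WCRT_n$, so the maximization can only return a value at least as large. Your explicit remarks on why Eq.~\eqref{eq:f_j} must be an inequality and on the role of the truncation are consistent with, and slightly more detailed than, the paper's own discussion.
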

\begin{proof}
  This comes from the above discussions and
  Lemma~\ref{lemma:necessary-critical-wcrt}.  The release pattern that
  has the maximum $f_m$ (provided that $g_1$ is set to $0$) by using
  FP preemptive scheduling under all the constraints due to the three
  conditions in Lemma~\ref{lemma:necessary-critical-wcrt} leads to the
  worst-case response time if $WCRT_n \leq T_n$.  
\end{proof}

However, the MILP in Eq.~\eqref{eq:MILP-v1} is not an exact response
time analysis (or schedulability test) due to the following reason:
the condition in Eq.~\eqref{eq:f_j} is only a safe upper bound on $f_j$, but does not provide the exact
$f_j$ under the release pattern. Suppose that $n$ is $2$. We have
$T_1=4$ and $C_1=2$. Consider $g_1=0$ and $r_{i,1}=0$, $C_n^1 = 2$,
and $S_n^1=8$. In this case, it implies that the suspension interval
$S_n^1$ has no impact when we analyze the worst-case finishing time of
the first computation segment.\footnote{This is also proved in
  Property \ref{property:p2}.}  It is clear that the exact
(worst-case) finishing time of $C_n^1$ is $4$ under this release
pattern. However, there is another feasible solution that satisfies
Eq.~\eqref{eq:f_j} by setting $N_{1,1}$ to $2$, $f_1$ to $6$, and
$r_{1,2}$ to $14$. Therefore, in fact, $f_1$ can have the following
cases:\footnote{For this case, it becomes infeasible when $N_{1,1}$ is
  larger than $2$.}
\begin{itemize}
\item  $f_1$ is $2$ when $N_{1,1}$ is $0$, 
\item  $f_1$ is $4$ when $N_{1,1}$ is $1$, and
\item  $f_1$ is $6$ when $N_{1,1}$ is $2$.
\end{itemize}
However, due to the objective function for \emph{maximization}, the
optimal MILP solution is to set $f_1$ to $6$ instead of $4$ under this
MILP.\footnote{This also explains why the statement in the earlier version
  of this report (\url{https://arxiv.org/abs/1605.00124v1}) was erroneous since it skipped the above discussion and directly concluded that the MILP returns the exact worst-case response time.}

\subsection{Connection to the MILP by Nelissen et al. in ECRTS 2015}

The MILP in Eq.~\eqref{eq:MILP-v1} looks different from the MILP in 
Eq.~\eqref{eq:MILP}, but they are in fact equivalent.  Suppose that
$R_j = f_j - g_j, \forall j=1,2,\ldots,m$ and $O_{i,j} = r_{i,j} -
g_j, \forall i=1,2,\ldots,n-1, \forall j=1,2,\ldots,m$. We can
rephrase the MILP in Eq.~\eqref{eq:MILP-v1} into the MILP in
Eq.~\eqref{eq:MILP} as follows:
\begin{itemize}
\item Clearly, the objective function in Eq.~\eqref{eq:objective-v1}
  is identical to that in Eq.~\eqref{eq:objective}.
\item The condition in Eq.~\eqref{eq:f_j} leads to Eq.~\eqref{eq:R_j}.
\item The condition in Eq.~\eqref{eq:r-versus-g} is identical to Eq.~\eqref{eq:O-positive}.
\item The condition in Eq.~\eqref{eq:g_j} and Eq.~\eqref{eq:f_j} can be used to rephrase Eq.~\eqref{eq:N-ij-2} into 
{\small
  \begin{equation*}
    N_{i,j} \leq \frac{r_{i,j+1} - r_{i,j}}{T_i} = \frac{g_j + R_j + S_n^j + O_{i,j+1} - (g_j + O_{i,j})}{T_i} = \frac{R_j + S_n^j + O_{i,j+1} - O_{i,j}}{T_i},
  \end{equation*}} which is identical to the condition in Eq.~\eqref{eq:O-distance}.
\item Moreover, the condition in Eq.~\eqref{eq:N-ij-1} is identical to Eq.~\eqref{eq:N-bounds}.
\end{itemize}

Therefore, we have the following corollaries.
\begin{corollary}
  The optimal solution of the MILP in Eq.~\eqref{eq:MILP} (even by excluding Eqs.~\eqref{eq:total-segment-upperbound}, \eqref{eq:total-upperbound},~or~\eqref{eq:Rj>rel-original}) is a safe
  upper bound of the 
  worst-case response time of task $\tau_n$ if $WCRT_n \leq T_n$.
\end{corollary}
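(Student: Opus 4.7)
The plan is to deduce this corollary directly from Theorem~\ref{thm:MILP-v1} via the explicit change of variables already set up in the discussion immediately preceding the corollary. Theorem~\ref{thm:MILP-v1} guarantees that the MILP in Eq.~\eqref{eq:MILP-v1} returns a safe upper bound on $WCRT_n$ whenever $WCRT_n \leq T_n$. Hence the only remaining work is to transfer this guarantee to the MILP in Eq.~\eqref{eq:MILP} and to argue that the three extra constraints Eqs.~\eqref{eq:total-segment-upperbound},~\eqref{eq:total-upperbound},~and~\eqref{eq:Rj>rel-original} may be dropped without loss of safety.

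The key step is to formalize the equivalence between the two MILPs. I would introduce the affine substitution $R_j := f_j - g_j$ and $O_{i,j} := r_{i,j} - g_j$ and then verify constraint by constraint that this substitution is a bijection between the two feasible regions which preserves the objective value. The bulleted verification just above the corollary already carries out each of the individual matchings (objective with objective, Eq.~\eqref{eq:f_j} with Eq.~\eqref{eq:R_j}, Eq.~\eqref{eq:r-versus-g} with Eq.~\eqref{eq:O-positive}, the rewriting of Eq.~\eqref{eq:N-ij-2} via Eqs.~\eqref{eq:g_j} and \eqref{eq:f_j} into Eq.~\eqref{eq:O-distance}, and Eq.~\eqref{eq:N-ij-1} with Eq.~\eqref{eq:N-bounds}). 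I would simply cite these identifications and additionally note that $R_j \geq 0$ follows from the ordering $g_1 \leq f_1 \leq g_2 \leq \cdots \leq f_m$ established at the opening of Section~\ref{sec:proof-necessary}, and that the integrality of $N_{i,j}$ transfers trivially.

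Finally, I would close by the obvious monotonicity argument. Eqs.~\eqref{eq:total-segment-upperbound},~\eqref{eq:total-upperbound},~and~\eqref{eq:Rj>rel-original} appear only as upper-bound (or strict-inequality) constraints on variables already present in the MILP, so removing any subset of them only enlarges the feasible region of a maximization problem. Hence the optimum value of the relaxed MILP is at least as large as that of the full MILP in Eq.~\eqref{eq:MILP}, which, by the change-of-variables equivalence, equals the optimum of the MILP in Eq.~\eqref{eq:MILP-v1}, which in turn is a safe upper bound on $WCRT_n$ by Theorem~\ref{thm:MILP-v1}. Therefore the optimum of the relaxed MILP is also a safe upper bound, which is exactly the corollary's claim.

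No step is technically hard; the only place requiring care is the change-of-variables verification, to ensure that the affine bijection does not tacitly weaken any constraint. In particular, one should check that Eq.~\eqref{eq:N-ij-2}, after substitution, indeed matches Eq.~\eqref{eq:O-distance} rather than a strictly stronger or weaker form, and that the non-negativity $O_{i,j} \geq 0$ in Eq.~\eqref{eq:O-positive} is precisely $r_{i,j} \geq g_j$ from Eq.~\eqref{eq:r-versus-g}. Once these items are dispatched, the corollary follows immediately.
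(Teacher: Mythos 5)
Your change-of-variables step is exactly the paper's own route: the bulleted identifications preceding the corollary, combined with Theorem~\ref{thm:MILP-v1}, show that the MILP in Eq.~\eqref{eq:MILP} restricted to the constraints \eqref{eq:R_j}--\eqref{eq:N-integer} has the same optimum as the MILP in Eq.~\eqref{eq:MILP-v1}, hence is a safe upper bound on $WCRT_n$ when $WCRT_n \leq T_n$. That part of your proposal is fine.

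The gap is in your closing chain ``$\mathrm{opt}(\text{relaxed}) \geq \mathrm{opt}(\text{full MILP \eqref{eq:MILP}}) = \mathrm{opt}(\text{MILP \eqref{eq:MILP-v1}})$''. The asserted equality is false in general: the equivalence established by the substitution $R_j = f_j - g_j$, $O_{i,j} = r_{i,j} - g_j$ is between Eq.~\eqref{eq:MILP-v1} and Eq.~\eqref{eq:MILP} \emph{with all three of} \eqref{eq:total-segment-upperbound}, \eqref{eq:total-upperbound}, \eqref{eq:Rj>rel-original} \emph{removed}, not with the full MILP; those constraints are there precisely to tighten the formulation (Section~\ref{sec:gap} exploits the difference they make), so in general $\mathrm{opt}(\text{full}) \leq \mathrm{opt}(\eqref{eq:MILP-v1})$ with possible strict inequality. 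Your monotonicity observation (dropping constraints enlarges the feasible region of a maximization) therefore points the wrong way for safety: it lower-bounds each relaxation by the full optimum, but the safety of the full optimum, i.e.\ $\mathrm{opt}(\text{full}) \geq WCRT_n$, is exactly what remains unproven. As written, your argument establishes the corollary only for the variant in which \emph{all three} extra constraints are excluded (which is literally \eqref{eq:MILP-v1}). To cover the full MILP and the partially relaxed variants you need the missing ingredient: a worst-case release pattern satisfying Lemma~\ref{lemma:necessary-critical-wcrt} induces a feasible variable assignment with objective value at least $WCRT_n$ that \emph{also} satisfies \eqref{eq:total-segment-upperbound}, \eqref{eq:total-upperbound}, and \eqref{eq:Rj>rel-original}, because $UB_{ss,j}$ and $UB_{ss}$ are valid upper bounds and \eqref{eq:Rj>rel-original} is a necessary condition of any actual fixed-priority schedule; adding such valid constraints cannot cut off this witness, so every version of the MILP that keeps the base constraints \eqref{eq:R_j}--\eqref{eq:N-integer} remains safe. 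That validity argument (which the paper leans on implicitly, deferring to Nelissen et al.) cannot be replaced by the relaxation-monotonicity step you used.
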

\begin{corollary}
  If the optimal solution of the MILP in Eq.~\eqref{eq:MILP} (even by excluding Eqs.~\eqref{eq:total-segment-upperbound}, \eqref{eq:total-upperbound},~or~\eqref{eq:Rj>rel-original}), or
  equivalently the MILP in Eq.~\eqref{eq:MILP-v1} is no more than 
  $T_n$, then $WCRT_n \leq T_n$.
\end{corollary}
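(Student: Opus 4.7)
The plan is to establish the contrapositive: if $WCRT_n > T_n$, then the MILP optimum strictly exceeds $T_n$. If I can exhibit a feasible MILP point whose objective value is greater than $T_n$ whenever the true worst-case response time is, the corollary follows at once.

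First, I invoke Lemma~\ref{lemma:necessary-critical-wcrt-larger}: when $WCRT_n > T_n$, there is a valid release pattern satisfying Conditions 1, 2, 3 of Lemma~\ref{lemma:necessary-critical-wcrt} whose induced response time for some job $J$ of $\tau_n$ strictly exceeds $T_n$. Let $g_j, f_j$ be the arrival and finishing times of the $j$-th computation segment of $J$, translated so that $g_1 = 0$. From this pattern I read off MILP values using the conventions of Section~\ref{sec:proof-milp}: define $r_{i,j}$ as the release time of the first job of task $\tau_i$ at or after $g_j$ (capped at $f_j + T_i$ when no interfering job exists) and $N_{i,j}$ as the number of jobs of $\tau_i$ released in $[g_j, f_j)$.

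Second, I verify that this assignment is a feasible point of the MILP in Eq.~\eqref{eq:MILP-v1}. Condition 2 yields Eq.~\eqref{eq:g_j}; Eq.~\eqref{eq:r-versus-g} and Eq.~\eqref{eq:N-integer-v1} follow directly from the definitions of $r_{i,j}$ and $N_{i,j}$; the minimum inter-arrival time of $\tau_i$ forces Eqs.~\eqref{eq:N-ij-1} and~\eqref{eq:N-ij-2}. For Eq.~\eqref{eq:f_j}, I use Conditions 1 and 3 to argue that during $[g_j, f_j)$ the processor runs only the $j$-th segment of $J$ together with the $N_{i,j}$ counted jobs of each $\tau_i$, all executed at their worst-case execution times, so $f_j - g_j = C_n^j + \sum_i N_{i,j} C_i$, satisfying Eq.~\eqref{eq:f_j} with equality. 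The constructed point is therefore MILP-feasible with objective $f_m > T_n$, forcing the optimum above $T_n$. The equivalence between the MILP in Eq.~\eqref{eq:MILP-v1} and the MILP in Eq.~\eqref{eq:MILP} (without the three additional constraints) that was just established in the section propagates the conclusion to Eq.~\eqref{eq:MILP}; and the same witness also meets Eqs.~\eqref{eq:total-segment-upperbound}, \eqref{eq:total-upperbound}, and \eqref{eq:Rj>rel-original} whenever the quantities $UB_{ss,j}$, $UB_{ss}$, and the interference accounting in Eq.~\eqref{eq:Rj>rel-original} are themselves safe, so those constraints cannot cut off the witness.

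The main obstacle I expect is the justification of Eq.~\eqref{eq:f_j}: one must carefully argue that under Conditions 1 and 3 no processor time in $[g_j, f_j)$ is consumed by any higher-priority job other than those counted by the $N_{i,j}$'s, making the inequality tight rather than merely valid. Once this accounting is pinned down, the remaining constraints are bookkeeping, and the contrapositive closes the proof.
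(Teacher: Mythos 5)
Your proposal is correct and follows essentially the same route the paper intends: the paper leaves this corollary's proof implicit, but it rests exactly on your contrapositive argument via Lemma~\ref{lemma:necessary-critical-wcrt-larger}, the fact that any release pattern satisfying Conditions 1--3 induces a feasible point of the MILP in Eq.~\eqref{eq:MILP-v1} (Eqs.~\eqref{eq:N-ij-1}--\eqref{eq:g_j} were derived precisely for such patterns), and the just-established equivalence with Eq.~\eqref{eq:MILP}. Your hedge that the extra constraints Eqs.~\eqref{eq:total-segment-upperbound}, \eqref{eq:total-upperbound}, \eqref{eq:Rj>rel-original} do not cut off the witness because $UB_{ss,j}$, $UB_{ss}$ and the interference bound are safe matches the paper's (implicit) assumption as well.
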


\section{Response Time Analysis: How Far is the Gap?}
\label{sec:gap}

Since the MILP approach listed in Section~\ref{sec:proof-milp} does
 not provide the exact worst-case response
time of task $\tau_n$, it is also meaningful to examine whether the
upper bound on the worst-case response time by using the MILP
approach in Section~\ref{sec:proof-milp} is always very close to (or
not too far from) the exact worst-case response time. Unfortunately,
we will demonstrate a task set, in which the derived worst-case
response time from the MILP in Eq.~\eqref{eq:MILP} is at least $\frac{4m+4}{9}$ times
the exact worst-case response time, where $m \geq 2$ is the number of
computation segments of task $\tau_n$. 

We consider the following task set ${\bf T}^{MILP}$ with $n=m+4$ tasks,
where $q$ is a positive integer, $m$ is a positive integer with $m
\geq 2$, and $0 < \epsilon < 1/q$:
\begin{itemize}
\item For task $\tau_1$, we set $C_1 = 1, S_1 = 0, D_1=T_1=2$.
\item For task $\tau_2$, we set $C_2 = q, S_2 = 0, D_2=T_2=4q$.
\item For task $\tau_3$, we set $C_3 = 2q-1+\epsilon, S_3 = 0, D_3=T_3=8q$.
\item For task $\tau_i$ with $i=4,5,\ldots, m+3$, we set $C_i = 1-\epsilon,
  S_i = 0, D_i=8qm, T_i= 16qm^2+(m-1)(2q-1)$.
\item For task $\tau_{m+4}$, we create a segmented self-suspending
  task with $m$ computation segments separated by $m-1$
  self-suspension intervals, i.e., $m_{n}=m$, in which $C_{m+4}^j=
  1-\epsilon$ for $j=1,2,\ldots,m$, $S_{m+4}^j = 2q-1$ for
  $j=1,2,\ldots,m-1$. The values of $D_{m+4}$ and $T_{m+4}$ are left
  open, and our goal here is to find the minimum feasible $D_{m+4}$
  that can be set when $T_{m+4}$ is large enough.
\end{itemize}

The following property is very useful when we need to calculate the
worst-case response time:

\begin{property}
\label{property:upper-bound-exact-v2}
  For a given positive integer $x$, the minimum $t | t > 0$ such that
  $x(1-\epsilon)+\ceiling{\frac{t}{2}} + \ceiling{\frac{t}{4q}}q
  +\ceiling{\frac{t}{8q}}(2q-1+\epsilon) =t$ happens when $t$ is
  $x\cdot 8q$.
\end{property}
\begin{proof}
  This can be proved by simple arithmetics.  
\end{proof}
By using Property~\ref{property:upper-bound-exact-v2}, it is not
difficult to obtain the exact worst-case response time by using
Lemma~\ref{lemma:necessary-critical-wcrt}.
\begin{lemma}
  \label{lemma:task-MILP}
  Tasks $\tau_1, \tau_2, \ldots, \tau_{m+3}$ in ${\bf T}^{MILP}$ can
  meet their deadlines. The worst-case response time of task $\tau_n$
  in ${\bf T}^{MILP}$ is $16qm+(m-1)(2q-1)$.
\end{lemma}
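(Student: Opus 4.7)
The plan is to establish schedulability of the higher-priority tasks by standard time-demand analysis (TDA) and then compute the exact worst-case response time of $\tau_n$ by combining Lemma~\ref{lemma:necessary-critical-wcrt} with Property~\ref{property:upper-bound-exact-v2}. Task $\tau_1$ is trivially feasible since $C_1=1\le D_1=2$, and a direct TDA iteration yields $R_2\le 4q=D_2$ and $R_3\le 8q=D_3$. For each $\tau_i$ with $i\in\{4,\ldots,m+3\}$, I would group $\tau_i$ together with the previously indexed tasks $\tau_4,\ldots,\tau_{i-1}$ as $(i-3)$ jobs of workload $1-\epsilon$; the resulting TDA equation then coincides with the equation in Property~\ref{property:upper-bound-exact-v2} at $x=i-3$, so $R_i=8q(i-3)\le 8qm=D_i$.

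For the WCRT of $\tau_n$ I would prove matching upper and lower bounds. For the upper bound, first choose $T_n$ so that $16qm+(m-1)(2q-1)\le T_n\le T_i=16qm^2+(m-1)(2q-1)$; this is consistent with the requirement $T_n$ be ``large enough'' and guarantees $T_i\ge T_n$, so Property~\ref{property:p3} applies to each $\tau_i$ with $i\in\{4,\ldots,m+3\}$, and each such task contributes at most one interfering job during the execution of the analyzed job $J$. Fix any release pattern satisfying Conditions~1--3 of Lemma~\ref{lemma:necessary-critical-wcrt}, and let $k_j$ denote the number of jobs from $\{\tau_4,\ldots,\tau_{m+3}\}$ released together with segment $j$; then $\sum_{j=1}^{m}k_j\le m$. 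For segment $j$ the standard TDA bound on $R_j$ (counting each $\tau_i\in\{\tau_1,\tau_2,\tau_3\}$ at most $\lceil R_j/T_i\rceil$ times) is exactly the equation in Property~\ref{property:upper-bound-exact-v2} at $x=1+k_j$, so $R_j\le 8q(1+k_j)$. Summing yields $\sum_{j=1}^{m}R_j+\sum_{j=1}^{m-1}S_n^j\le 8q(m+\sum_{j=1}^{m}k_j)+(m-1)(2q-1)\le 16qm+(m-1)(2q-1)$.

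For the matching lower bound I would construct an explicit release pattern: assign task $\tau_{3+j}$ to release a single job at $g_j$ (so $k_j=1$ for every $j$), while $\tau_1,\tau_2,\tau_3$ each release a job at $g_j$ and strictly periodically thereafter until $f_j$. By Property~\ref{property:upper-bound-exact-v2} with $x=2$, each segment's response time equals exactly $16q$, and Condition~2 of Lemma~\ref{lemma:necessary-critical-wcrt} makes the total equal $16qm+(m-1)(2q-1)$. The main obstacle is checking that this is actually a valid release pattern: using $g_{j+1}-g_j=R_j+S_n^j=18q-1$ and the fact that $T_i$ divides $16q$ for each $T_i\in\{2,4q,8q\}$, the last release of $\tau_i$ within segment $j$ occurs at $g_j+16q-T_i$, so the gap to its release at $g_{j+1}$ equals $T_i+(2q-1)\ge T_i$ for $q\ge 1$; the minimum inter-arrival constraint for each $\tau_{3+j}$ is trivial since it releases only once within the horizon.
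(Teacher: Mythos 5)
Your proposal is correct and takes essentially the same route as the paper: schedulability of $\tau_1,\ldots,\tau_{m+3}$ via time-demand analysis combined with Property~\ref{property:upper-bound-exact-v2}, and the worst-case response time of $\tau_n$ by bounding each segment's response time by $8q(1+k_j)$ with $\sum_j k_j \le m$ under the release patterns of Lemma~\ref{lemma:necessary-critical-wcrt}, using that each $\tau_i$, $i\ge 4$, interferes with at most one job. Your explicit construction and validity check of the pattern attaining $16qm+(m-1)(2q-1)$ merely spells out the concrete worst-case release pattern whose existence the paper asserts without detail.
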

\begin{proof}
  The schedulability of tasks $\tau_1, \tau_2, \tau_3$ comes by using
  the standard time demand analysis, and the schedulability of tasks
  $\tau_4, \tau_5, \ldots, \tau_{m+3}$ follows from
  Property~\ref{property:upper-bound-exact-v2}.  The constructed task
  set ${\bf T}^{MILP}$ has the following properties based on
  Lemma~\ref{lemma:necessary-critical-wcrt}: a) We should always
  release the three highest priority tasks together with a computation
  segment of task $\tau_n$ and release their subsequent jobs
  periodically and as early as possible by respecting their minimum
  inter-arrival times until this computation segment of task $\tau_n$
  finishes. b) If the response time of task $\tau_n$ is no more than
  $16qm^2+(m-1)(2q-1)$, then each task $\tau_i$ for $i=4,5,\ldots,m+3$
  only releases one job to interfere in a computation segment of task
  $\tau_n$.

  Suppose that there are $\ell_j$ tasks among $\tau_4, \tau_5, \ldots,
  \tau_{m+3}$ which interfere in the $j$-th computation segment of task
  $\tau_n$. We know that $\ell_j$ is an integer and $\ell_j \geq 0$
  for $j=1,2,\ldots,m$ and $\sum_{j=1}^{m} \ell_j = m$. Moreover, the
  response time of $j$-th computation segment of task $\tau_n$ is
  $(\ell_j+1)\cdot 8q$ by
  Property~\ref{property:upper-bound-exact-v2}.  Therefore, we know
  that the worst-case response time of task $\tau_n$ is
 \[
   \left(\sum_{j=1}^{m} (\ell_j+1)\cdot 8q\right) + (m-1)(2q-1) = 16qm + (m-1)(2q-1).
 \]
 Since $16qm+(m-1)(2q-1) < T_i$ for $i=4,5,\ldots,n-1$, we know that
 the above value is an upper bound by all the possible release
 patterns that satisfy Lemma~\ref{lemma:necessary-critical-wcrt}. And,
 there is a concrete release/execution pattern which leads the
 response time of task $\tau_n$ exactly to this upper
 bound. Therefore, this is the exact worst-case response time of task
 $\tau_n$ in ${\bf T}^{MILP}$.
\end{proof}

\subsection{Excluding the Boundary Constraints by Eq.~\eqref{eq:total-segment-upperbound} and Eq. \eqref{eq:total-upperbound}}

We first investigate whether the MILP without the boundary constraints presented by Eq.~\eqref{eq:total-segment-upperbound} and Eq. \eqref{eq:total-upperbound}.  We explore this specific condition under a special case, by further ignoring
 the interference of the tasks $\tau_4, \tau_5, \ldots,
 \tau_{m+3}$. Then, the worst-case response time of the $j$-th
 computation segment of task $\tau_n$ (after the segment is released)
 can be obtained by the following MILP.
\begin{subequations}
  \label{eq:MILP-v3}{\small
  \begin{align}
   & \mbox{\bf maximize: } R_j \label{eq:objective-v3}\\
   & \mbox{\bf subject to:} \nonumber\\
    & R_j = 1- \epsilon + N_{1,j} \cdot 1 + N_{2,j} \cdot q + N_{3,j}
     \cdot (2q-1+\epsilon), \qquad\label{eq:R-v3}\\
    & O_{1,j} \geq 0, O_{2,j} \geq 0, O_{3,j} \geq 0 \qquad \label{eq:O-v3}\\
     &0 \leq N_{i,j}  \leq \ceiling{\frac{R_j - O_{i,j}}{T_i}},\qquad
    \forall i=1,\ldots,3 \label{eq:N-v3}\\
    & R_j > rel_{i,j} + \sum_{\ell=1}^{3}\max\left\{0,
     \floor{\frac{O_{\ell,j}+N_{\ell,j}T_\ell - rel_{i,j}}{T_\ell}}C_\ell\right\}, \qquad
    \forall i=1,\ldots,3\label{eq:R>rel-v3}\\
    & rel_{i,j} = O_{i,j} + (N_{i,j}-1) T_i,  \qquad
    \forall i=1,\ldots,3\label{eq:rel-v3}\\
    & N_{i,j} \mbox{ is an integer },\qquad
    \forall i=1,\ldots,3\label{eq:N-integer-v3}
  \end{align}}
\end{subequations}

\begin{lemma}
\label{lemma:upper-bound-pessmistic-v3}
By the assumption that $q$ is a positive integer $q \geq 1$ and $0 < \epsilon < 1/q$, the setting of
$R_j=8q^2+6q+1+q\epsilon$, $O_{1,j} = 0$, $O_{2,j} = \epsilon/4$, $O_{3,j} =
\epsilon/2$, $N_{1,j} = 4q^2+3q+1$, $N_{2,j} = 2q+2$, and $N_{3,j}
= q+1$ is a feasible solution of the MILP in Eq.~\eqref{eq:MILP-v3}.
\end{lemma}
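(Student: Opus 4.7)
The plan is purely verificational: substitute the six proposed values into the MILP in Eq.~\eqref{eq:MILP-v3} and check that every constraint (other than the objective) is satisfied. There is nothing to optimise, so the work reduces to a sequence of arithmetic checks that exploit the fact that $q\ge 1$ is an integer and $0<\epsilon<1/q$.

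First I would verify Eq.~\eqref{eq:R-v3} by expanding
\[
1-\epsilon + (4q^{2}+3q+1)\cdot 1 + (2q+2)\cdot q + (q+1)(2q-1+\epsilon)
\]
and collecting the $q^{2}$, $q$, constant, and $\epsilon$-terms to confirm the sum equals $8q^{2}+6q+1+q\epsilon$. Eq.~\eqref{eq:O-v3} and Eq.~\eqref{eq:N-integer-v3} are immediate: the chosen offsets are non-negative and the chosen $N_{i,j}$ are positive integers.

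Next I would handle the ceiling bounds in Eq.~\eqref{eq:N-v3}. For each $i\in\{1,2,3\}$ the quotient $(R_{j}-O_{i,j})/T_{i}$ has a clean integer part plus a small positive fractional remainder of the form ``something $+$ $O(\epsilon/q)$'', and I would show this remainder is strictly between $0$ and $1$ by using $q\epsilon<1$ and $q\ge 1$. For instance, for $i=2$ the quotient becomes $2q+\tfrac{3}{2}+\tfrac{1+q\epsilon-\epsilon/4}{4q}$, where the final term lies in $(0,\tfrac{1}{2q})\subseteq(0,\tfrac{1}{2})$, so the ceiling is exactly $2q+2$. The cases $i=1$ and $i=3$ are analogous.

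The main obstacle is Eq.~\eqref{eq:R>rel-v3}, which must be checked for each of the three choices $i=1,2,3$, and each check involves three floor terms indexed by $\ell$. For a fixed $i$ I would first compute $rel_{i,j}=O_{i,j}+(N_{i,j}-1)T_{i}$ in closed form, then evaluate $O_{\ell,j}+N_{\ell,j}T_{\ell}-rel_{i,j}$ for $\ell=1,2,3$ and determine each floor by squeezing the small $\epsilon$-terms. The delicate point is that the small nonzero offsets $O_{2,j}=\epsilon/4$ and $O_{3,j}=\epsilon/2$ are tuned precisely so that several of the inner floors drop by one compared with the ``$\epsilon=0$'' calculation; this is what allows the right-hand side of Eq.~\eqref{eq:R>rel-v3} to stay strictly below $R_{j}=8q^{2}+6q+1+q\epsilon$. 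I expect the tightest of the three inequalities to occur at $i=2$ (where the right-hand side equals $8q^{2}+6q+\epsilon/4$, leaving a gap of $1+q\epsilon-\epsilon/4>0$), but each of the three must still be checked. Collecting these calculations yields the claimed feasibility.
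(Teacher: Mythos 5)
Your proposal is correct and takes essentially the same route as the paper: direct substitution of the given values and constraint-by-constraint verification, including the explicit expansion for Eq.~\eqref{eq:R-v3}, the ceiling checks for Eq.~\eqref{eq:N-v3} (which the paper merely asserts), and the $rel_{i,j}$/floor computations for Eq.~\eqref{eq:R>rel-v3}, where your $i=2$ value $8q^2+6q+\epsilon/4$ matches the paper exactly. One harmless inaccuracy: the tightest case is actually $i=1$, whose right-hand side is $8q^2+6q+1$ leaving only the gap $q\epsilon<1$, not $i=2$; since you check all three cases anyway, this does not affect the argument.
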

\begin{proof}
  The first condition in Eq.~\eqref{eq:R-v3} holds since
{\small\begin{align*}
& 1-\epsilon+4q^2+3q+1+(2q+2)\cdot q + (q+1)\cdot(2q-1+\epsilon)\\ 
= &1-\xcancel{\epsilon}+4q^2+3q+\xcancel{1}+2q^2+2 q + 2q^2 -q + q\epsilon +2q-\xcancel{1} + \xcancel{\epsilon}\\
= & 8q^2+6q+1+q\epsilon.  
\end{align*}}
The conditions in Eqs.~\eqref{eq:O-v3},~\eqref{eq:N-v3},
and~\eqref{eq:N-integer-v3} clearly hold.  In this case, the condition
in Eq.~\eqref{eq:rel-v3} sets $rel_{1,j} = O_{1,j} + (N_{1,j}-1)\times
2 = 8q^2+6q$, $rel_{2,j} = O_{2,j} + (N_{2,j}-1)\times 4q =
8q^2+4q+\epsilon/4$, and $ rel_{3,j} = O_{3,j} + (N_{3,j}-1)\times 8q
= 8q^2 + \epsilon/2$. Now, we verify whether the condition in
Eq.~\eqref{eq:R>rel-v3} holds:
  \begin{itemize}
  \item When $i=1$, we have 
\[
8q^2+6q + 1 < R_j.
\]
  \item When $i=2$, we have 
    \begin{align*}
   &   8q^2+4q + \epsilon/4 + \max\left\{0, \floor{\frac{8q^2+6q+2 -
          (8q^2+4q + \epsilon/4)}{2}}\right\}\\
   &+  q + \max\left\{0, \floor{\frac{8q^2+8q+\epsilon/2 -
          (8q^2+4q + \epsilon/4)}{8q}}(2q-1+\epsilon)\right\}\\
=\;\;\; & 8q^2+4q + \epsilon/4  + q+q+0  =8q^2+6q+\epsilon/4< R_j 
    \end{align*}
  \item When $i=3$, we have 
    \begin{align*}
   &   8q^2+ \epsilon/2 + \max\left\{0, \floor{\frac{8q^2+6q+2 -
          (8q^2 + \epsilon/2)}{2}}\right\}\\
   &+  \max\left\{0, \floor{\frac{8q^2+8q+\epsilon/4 -
          (8q^2 + \epsilon/2)}{4q}}q\right\} + 2q+1-\epsilon\\
= & 8q^2+ \epsilon/2  + 3q+q+2q+1-\epsilon  = 8q^2+6q+1-\epsilon/2< R_j 
    \end{align*}
  \end{itemize}
  Therefore, we reach the conclusion.
\end{proof}

Now, we can examine the MILP in Eq.~\eqref{eq:MILP}, when
excluding 
Eq.~\eqref{eq:total-segment-upperbound} and
Eq. \eqref{eq:total-upperbound}:
\begin{subequations}
  \label{eq:MILP-v4}{\small
  \begin{align}
  &  \mbox{\bf maximize: } \;\;\;\;S_n + \sum_{j=1}^{m} R_j \label{eq:objective-v4}\\
   & \mbox{\bf subject to:} \qquad\qquad\qquad&\nonumber\\
  &   R_j = C_n^j + \sum_{i=1}^{n-1} N_{i,j} \times C_i, \qquad
     \forall j=1,\ldots,m \label{eq:R_j-v4}\\
    & O_{i,j} \geq 0, \qquad \forall i=1,\ldots,n-1, \forall
     j=1,\ldots,m \label{eq:O-positive-v4}\\
    & O_{i,j+1} \geq O_{i,j} + N_{i,j}\times T_i - (R_j + S_n^j), \qquad \label{eq:O-distance-v4}
    \forall i=1,\ldots,n-1, \forall j=1,\ldots,m-1\\
    & 0 \leq N_{i,j}  \leq \ceiling{\frac{R_j - O_{i,j}}{T_i}}, \qquad
    \forall i=1,\ldots,n-1, \forall j=1,\ldots,m \label{eq:N-bounds-v4}\\
   & R_j > rel_{i,j} + \sum_{\ell=1}^{n-1}\max\left\{0,
     \floor{\frac{O_{\ell,j}+N_{\ell,j}T_\ell - rel_{i,j}}{T_\ell}}C_\ell\right\} \;\;
    \forall i=1,\ldots,n-1, \forall j=1,\ldots,m\label{eq:R>rel-v4}\\
    & rel_{i,j} = O_{i,j} + (N_{i,j}-1) T_i  \qquad
    \forall i=1,\ldots,n-1, \forall j=1,\ldots,m\label{eq:rel-v4}\\
    & N_{i,j} \mbox{ is an integer }, \qquad
    \forall i=1,\ldots,n-1, \forall j=1,\ldots,m \label{eq:N-integer-v4}
  \end{align}}
\end{subequations}

\begin{lemma}
\label{property:upper-bound-pessmistic-v4}
Suppose that $q$ is a positive integer $q \geq 1$ and $0 < \epsilon <
1/q$.  For any $j=1,2,\ldots,m$, the setting of
$R_j=8q^2+6q+1+q\epsilon$, $O_{1,j} = 0$, $O_{2,j} = \epsilon/4$,
$O_{3,j} = \epsilon/2$, $N_{1,j} = 4q^2+3q+1$, $N_{2,j} = 2q+2$, and
$N_{3,j} = q+1$ is a feasible solution of the MILP in
Eq.~\eqref{eq:MILP-v4} for ${\bf T}^{MILP}$ when $N_{i,j}=0, O_{i,j} = 0$ for all
$i=4,5,\ldots,n-1$. Therefore, the optimal solution of
Eq.~\eqref{eq:MILP-v4} is at least $S_n + m( 8q^2+6q+1+q\epsilon) =
(m-1)(2q-1) + m( 8q^2+6q+1+q\epsilon)$.
\end{lemma}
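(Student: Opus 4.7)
The plan is to verify that the proposed assignment satisfies every constraint of the MILP in Eq.~\eqref{eq:MILP-v4} for all $j=1,\ldots,m$, and then read off the objective value as a lower bound on the optimum. The verification decomposes cleanly into two cases according to whether $i$ ranges over the three high-priority tasks $\{1,2,3\}$ or over the padding tasks $\{4,\ldots,n-1\}$ that are assigned zero interference.

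For $i\in\{1,2,3\}$, within each fixed segment $j$ the tuple $(R_j, O_{i,j}, N_{i,j})$ is exactly the setting already checked in Lemma~\ref{lemma:upper-bound-pessmistic-v3}. The extra terms in the sums of Eqs.~\eqref{eq:R_j-v4} and~\eqref{eq:R>rel-v4} that come from $\ell\in\{4,\ldots,n-1\}$ all vanish: in Eq.~\eqref{eq:R_j-v4} directly because $N_{\ell,j}=0$, and in Eq.~\eqref{eq:R>rel-v4} because the values $rel_{i,j}$ for $i\in\{1,2,3\}$ computed in Lemma~\ref{lemma:upper-bound-pessmistic-v3} are strictly positive, so the numerator $O_{\ell,j}+N_{\ell,j}T_\ell - rel_{i,j} = -rel_{i,j}$ is negative and the corresponding $\max$-term is zero. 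Consequently, Eqs.~\eqref{eq:R_j-v4},~\eqref{eq:O-positive-v4},~\eqref{eq:N-bounds-v4},~\eqref{eq:R>rel-v4},~\eqref{eq:rel-v4},~\eqref{eq:N-integer-v4} for $i\in\{1,2,3\}$ inherit directly from Lemma~\ref{lemma:upper-bound-pessmistic-v3}. The only genuinely new condition is the inter-segment constraint Eq.~\eqref{eq:O-distance-v4}: since $O_{i,j+1}=O_{i,j}$ under our assignment, this collapses to the numerical inequality $N_{i,j} T_i \leq R_j + S_n^j$, which a short substitution shows to hold for each of $i=1,2,3$ under the hypothesis $q\geq 1$.

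For $i\in\{4,\ldots,n-1\}$, the setting $N_{i,j}=0$ and $O_{i,j}=0$ makes Eqs.~\eqref{eq:O-positive-v4},~\eqref{eq:N-bounds-v4},~\eqref{eq:N-integer-v4} trivial, makes Eq.~\eqref{eq:O-distance-v4} hold vacuously because its right-hand side is negative, and contributes nothing to Eq.~\eqref{eq:R_j-v4}. The step I expect to require the most care is Eq.~\eqref{eq:R>rel-v4} for such an $i$: here the formal value $rel_{i,j}=-T_i$ does not correspond to any actual release of $\tau_i$, and the intended MILP reading (as in~\cite{ecrts15nelissen}) is that the constraint is activated only when $N_{i,j}\geq 1$. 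I would state this convention explicitly, since the whole point of the construction is precisely that the padding tasks $\tau_4,\ldots,\tau_{m+3}$ contribute nothing to an actual execution trace yet are recruited by the MILP to inflate each $R_j$ through the joint interference pattern of $\tau_1,\tau_2,\tau_3$.

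Finally, evaluating the objective at the constructed feasible point gives $S_n + \sum_{j=1}^{m} R_j = (m-1)(2q-1) + m(8q^2+6q+1+q\epsilon)$, which is therefore a valid lower bound on the optimum of the MILP in Eq.~\eqref{eq:MILP-v4}, establishing the claim.
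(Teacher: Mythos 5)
Your verification follows the same route as the paper's proof: inherit the per-segment constraints for $i=1,2,3$ from Lemma~\ref{lemma:upper-bound-pessmistic-v3}, check the only genuinely new condition Eq.~\eqref{eq:O-distance-v4} by a short numerical argument (your reduction to $N_{i,j}T_i\leq R_j+S_n^j$ is equivalent to the paper's observation that $O_{i,j}+N_{i,j}T_i-(R_j+S_n^j)<0$), and evaluate the objective; your remark that the $\ell\in\{4,\ldots,n-1\}$ terms in Eq.~\eqref{eq:R>rel-v4} vanish for $i=1,2,3$ because $rel_{i,j}>0$ is correct and is actually spelled out more explicitly than in the paper. The one soft spot is your handling of the instances of Eq.~\eqref{eq:R>rel-v4} indexed by $i\in\{4,\ldots,n-1\}$: the paper imposes that constraint for \emph{all} $i$, with no ``active only when $N_{i,j}\geq 1$'' convention, so strictly speaking you should verify it rather than reinterpret the MILP; it does hold at the proposed point, since $rel_{i,j}=-T_i$ with $T_i\geq 16qm^2$ while the utilization of $\tau_1,\tau_2,\tau_3$ falls short of $1$ by $\frac{1-\epsilon}{8q}$, so the slack of roughly $2m^2(1-\epsilon)$ (together with the floor operations) keeps the right-hand side below $R_j$ for $m\geq 2$. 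Note that the paper's own proof silently skips this point as well, so apart from that caveat your argument matches it.
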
 
\begin{proof}
  By Lemma~\ref{lemma:upper-bound-pessmistic-v3}, we only need
  to further verify whether the condition in
  Eq.~\eqref{eq:O-distance-v4} holds when $i=1,2,3$. By the definition
  $S_n^j = (2q-1)$, we know that $O_{i,j} + N_{i,j} \times T_i -
  (R_j+S_n^j) = O_{i,j} + N_{i,j} \times T_i - (8q^2+8q+q\epsilon) <
  0$ for $i=1,2,3$. Therefore, the condition in
  Eq.~\eqref{eq:O-distance-v4} is by definition satisfied.
\end{proof}

Now, we can reach the conclusion that MILP in Eq.~\eqref{eq:MILP-v4}
can be very far from the exact worst-case response time by the
following theorem.
\begin{theorem}
\label{theorem-unbounded-q}
  The result of the MILP in Eq.~\eqref{eq:MILP-v4} for task $\tau_n$
  in task set ${\bf T}^{MILP}$ divided by the exact worst-case
  response time of task $\tau_n$ is at least $\frac{
    m(8q^2+6q+1+q\epsilon) + (m-1)(2q-1)}{16qm+(m-1)(2q-1)}$.  The
  ratio can become unbounded by the number of computation segments or
  the number of tasks when $q$ is sufficiently large.
\end{theorem}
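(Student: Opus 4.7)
The proof plan is to assemble Theorem~\ref{theorem-unbounded-q} directly from the two preceding lemmas. First, I would invoke Lemma~\ref{property:upper-bound-pessmistic-v4}, which exhibits a feasible solution of the MILP in Eq.~\eqref{eq:MILP-v4} whose objective value equals $(m-1)(2q-1) + m(8q^2+6q+1+q\epsilon)$; since that MILP is a maximization, this value is a valid lower bound on its optimum. Second, I would invoke Lemma~\ref{lemma:task-MILP}, which pins down the exact worst-case response time of $\tau_n$ in ${\bf T}^{MILP}$ as $16qm + (m-1)(2q-1)$. Dividing the former by the latter immediately yields the claimed lower bound $\frac{m(8q^2+6q+1+q\epsilon) + (m-1)(2q-1)}{16qm+(m-1)(2q-1)}$ on the ratio.

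For the unboundedness claim, a short asymptotic comparison suffices. The numerator is of order $\Theta(mq^2)$, driven by the $8mq^2$ term, whereas the denominator is only $\Theta(mq)$, driven by the $16mq$ and $2(m-1)q$ terms. Hence, for any fixed $m \geq 2$, letting $q \to \infty$ drives the ratio to infinity at rate roughly $\frac{4q}{9}$, already establishing unboundedness in $q$ alone. To make the ratio grow with the number of computation segments (equivalently, with the number of tasks $n = m+4$), I would couple the two parameters, for example by choosing $q = m+1$; a routine arithmetic then shows the ratio is at least $\frac{4m+4}{9}$, matching the bound announced at the opening of Section~\ref{sec:gap}.

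There is essentially no real obstacle beyond careful bookkeeping, since the heavy lifting has already been done inside Lemmas~\ref{lemma:task-MILP} and~\ref{property:upper-bound-pessmistic-v4}. The one subtlety worth double-checking in the write-up is that the per-segment feasible point constructed in Lemma~\ref{property:upper-bound-pessmistic-v4} for the auxiliary MILP in Eq.~\eqref{eq:MILP-v3}, when replicated identically across all $m$ segments, remains feasible for the full MILP in Eq.~\eqref{eq:MILP-v4}, in particular with respect to the cross-segment coupling constraint Eq.~\eqref{eq:O-distance-v4}; this is already verified inside Lemma~\ref{property:upper-bound-pessmistic-v4}, since $O_{i,j} + N_{i,j}T_i < R_j + S_n^j$ holds for $i \in \setof{1,2,3}$ and the trivial assignment $N_{i,j} = O_{i,j} = 0$ handles the remaining higher-priority tasks $\tau_4, \ldots, \tau_{n-1}$.
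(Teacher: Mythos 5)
Your proposal is correct and follows exactly the paper's route: the paper also proves Theorem~\ref{theorem-unbounded-q} by directly combining Lemma~\ref{lemma:task-MILP} (exact worst-case response time $16qm+(m-1)(2q-1)$) with Lemma~\ref{property:upper-bound-pessmistic-v4} (a feasible MILP solution of value $(m-1)(2q-1)+m(8q^2+6q+1+q\epsilon)$), and your additional checks (feasibility across segments via Eq.~\eqref{eq:O-distance-v4}, asymptotics in $q$ and coupling $q$ to $m$) just spell out details the paper leaves implicit or defers to Theorem~\ref{theorem-bounded-M}.
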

\begin{proof}
  This follows directly from Lemmas~\ref{lemma:task-MILP} and
  \ref{property:upper-bound-pessmistic-v4}.
\end{proof}

\begin{corollary}
\label{corollary-unbounded-q}
The result of the MILP in Eq.~\eqref{eq:MILP} by excluding the
boundary constraints presented by Eq.~\eqref{eq:total-segment-upperbound} and
Eq. \eqref{eq:total-upperbound} for task $\tau_n$ in task set ${\bf
  T}^{MILP}$ divided by the exact worst-case response time of task
$\tau_n$ is at least $\frac{ m(8q^2+6q+1+q\epsilon) +
  (m-1)(2q-1)}{16qm+(m-1)(2q-1)}$.  
\end{corollary}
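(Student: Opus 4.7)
The plan is to observe that this corollary is essentially a restatement of Theorem~\ref{theorem-unbounded-q} once we identify the two MILP formulations at hand. The MILP in Eq.~\eqref{eq:MILP} consists of Eqs.~\eqref{eq:R_j}--\eqref{eq:N-integer} together with the three additional constraints Eq.~\eqref{eq:total-segment-upperbound}, Eq.~\eqref{eq:total-upperbound}, and Eq.~\eqref{eq:Rj>rel-original}. Dropping only Eq.~\eqref{eq:total-segment-upperbound} and Eq.~\eqref{eq:total-upperbound} leaves exactly the constraint system written out as Eqs.~\eqref{eq:R_j-v4}--\eqref{eq:N-integer-v4}, i.e.\ the MILP in Eq.~\eqref{eq:MILP-v4}. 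Hence the two optimization problems are the same, and their optimal objective values coincide.

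Given this identification, I would just quote the two key facts already established. First, by Lemma~\ref{property:upper-bound-pessmistic-v4}, the assignment $R_j = 8q^2+6q+1+q\epsilon$ for each $j$, with the $O_{i,j}$ and $N_{i,j}$ values specified there (and $N_{i,j}=O_{i,j}=0$ for $i=4,\ldots,n-1$), is feasible for Eq.~\eqref{eq:MILP-v4}; consequently its optimal value is at least $(m-1)(2q-1) + m(8q^2+6q+1+q\epsilon)$. Second, Lemma~\ref{lemma:task-MILP} shows that the exact worst-case response time of $\tau_n$ in ${\bf T}^{MILP}$ equals $16qm+(m-1)(2q-1)$. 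Dividing these two expressions gives the claimed ratio.

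There is really no substantive obstacle here, since all the hard work was done in proving Lemmas~\ref{lemma:task-MILP}, \ref{lemma:upper-bound-pessmistic-v3}, and \ref{property:upper-bound-pessmistic-v4} (the last verifying that the offset-distance constraint Eq.~\eqref{eq:O-distance-v4} is automatically satisfied because $O_{i,j}+N_{i,j}T_i-(R_j+S_n^j)<0$ for $i=1,2,3$ at the proposed point). The only new remark needed is the syntactic observation that Eq.~\eqref{eq:MILP} minus its two boundary constraints is literally Eq.~\eqref{eq:MILP-v4}; the corollary then drops straight out of Theorem~\ref{theorem-unbounded-q} without any additional estimation. I would therefore write the proof as a single short paragraph that makes the identification explicit and then cites Theorem~\ref{theorem-unbounded-q}.
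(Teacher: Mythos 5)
Your proposal is correct and matches the paper's own argument: the paper likewise proves this corollary by noting it follows directly from Theorem~\ref{theorem-unbounded-q}, the identification of Eq.~\eqref{eq:MILP} without Eqs.~\eqref{eq:total-segment-upperbound} and \eqref{eq:total-upperbound} with Eq.~\eqref{eq:MILP-v4} being the only observation needed. Your explicit spelling-out of that identification and of the two supporting lemmas is just a slightly more detailed rendering of the same one-line proof.
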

\begin{proof}
  This follows directly from Theorem~\ref{theorem-unbounded-q}.
\end{proof}

\subsection{Improvements by the Boundary Conditions
  Eq.~\eqref{eq:total-segment-upperbound} and Eq.~\eqref{eq:total-upperbound}}



We now discuss the complete MILP in Eq.~\eqref{eq:MILP}.  Calculating
$UB_{ss,j}$ for task $\tau_n$ in ${\bf T}^{MILP}$ is rather
straightforward. This can be done by releasing all the jobs together
with $C_n^j$. That is, $UB_{ss,j}$ is the minimum $t | t > 0$ such
that $(m+1)(1-\epsilon)+\ceiling{\frac{t}{2}} +
\ceiling{\frac{t}{4q}}q +\ceiling{\frac{t}{8q}}(2q-1+\epsilon)=t$. By
Property~\ref{property:upper-bound-exact-v2}, we know that $UB_{ss,j}$
is $(m+1)\cdot 8q$.

Calculating $UB_{ss}$ is tricky. However, for task $\tau_n$ in ${\bf
  T}^{MILP}$ we can easily conclude that $UB_{ss} \leq m(m+1)\cdot 8q
+ S_n = 8qm(m+1) + (m-1)(2q-1)$. If we can get a very tight upper
bound of $UB_{ss}$, then, there is no need of the MILP. Here is how
$UB_{ss}$ was proposed to be calculated by Nelissen et
al. \cite{ecrts15nelissen}:
\begin{quote}
  Nelissen et al. \cite{ecrts15nelissen}:\footnote{The text is
    reorganized to use the proper references and notation in this
    paper.} \emph{Constraints \eqref{eq:total-segment-upperbound} and
  \eqref{eq:total-upperbound} reduce the research space of the problem
  by stating that the overall response time of $\tau_n$ and the
  response time of each of its execution regions, respectively, cannot
  be larger than known upper-bounds computed with simple methods such
  as the joint and split methods presented in \cite{bletsas:thesis}.}
\end{quote}

\begin{lemma}
  \label{lemma:upper-bound-milp-full}
  When $q$ is set to $m$ and $m \geq 2$, $UB_{ss}$ derived from the
  joint and split methods presented in \cite{bletsas:thesis} is at
  least $8m^2(m+1) + (m-1)(2m-1)$ for ${\bf T}^{MILP}$.
\end{lemma}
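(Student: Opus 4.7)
The plan is to show that both the split method and the joint method from Bletsas's thesis yield response-time upper bounds at least $8m^2(m+1) + (m-1)(2m-1)$ for $\tau_n$ in ${\bf T}^{MILP}$ when $q=m$, so that any $UB_{ss}$ taken as the best of those two simple methods must still be at least this value.

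First I would unpack the split method: it treats each computation segment $C_n^j$ of $\tau_n$ as an independent ``pseudo-task'' seeing full interference from all higher-priority tasks, so the per-segment response-time bound is precisely $UB_{ss,j}$ as already computed in the paragraph immediately preceding the lemma. By Property~\ref{property:upper-bound-exact-v2}, $UB_{ss,j} = 8q(m+1)$, and specializing to $q=m$ gives $UB_{ss,j} = 8m(m+1)$. Summing over the $m$ segments and then adding the total self-suspension time $S_n = (m-1)(2m-1)$ produces the split-method WCRT bound
\[
 m \cdot 8m(m+1) + (m-1)(2m-1) \;=\; 8m^2(m+1) + (m-1)(2m-1),
\]
which already matches the claimed lower bound exactly.

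Next I would analyze the joint method, which collapses $\tau_n$ into a single task with cumulative computation $C_n = m(1-\epsilon)$ and treats the cumulative suspension $S_n = (m-1)(2m-1)$ as an additional blocking term, computing the least positive fixed point of $t = C_n + S_n + \sum_{i=1}^{n-1} \ceiling{t/T_i} C_i$. The crucial observation is that $\tau_1,\tau_2,\tau_3$ alone have utilisation $\frac{1}{2} + \frac{1}{4} + \frac{2m-1+\epsilon}{8m} = 1 - \frac{1-\epsilon}{8m}$, only marginally below $1$ when $\epsilon < 1/m$. Evaluating the recurrence along multiples of $8m$, in the same spirit as Property~\ref{property:upper-bound-exact-v2}, each step of $8m$ in $t$ absorbs only $8m - (1-\epsilon)$ units of demand, so the fixed point lies near $t^\star = 8m \cdot k^\star$ with $k^\star \approx 2m + (m-1)(2m-1)/(1-\epsilon)$, i.e.\ on the order of $16m^3$. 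A direct comparison shows this exceeds the split-method bound $8m^2(m+1)+(m-1)(2m-1) \approx 8m^3$ for every $m \geq 2$.

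Combining the two bounds, the minimum returned by the joint and split methods equals the split bound, giving $UB_{ss} \geq 8m^2(m+1) + (m-1)(2m-1)$ as claimed. The main obstacle is really just bookkeeping: one must make sure that the ceiling terms in the joint-method recurrence do not accidentally drop the fixed point below the split bound for small $m$, which is handled by the multiple-of-$8m$ alignment trick already used in Property~\ref{property:upper-bound-exact-v2}; and one must be explicit that $UB_{ss}$, as used in the MILP constraint \eqref{eq:total-upperbound}, is indeed computed by taking the tightest among the joint and split methods so that the split-method bound is the binding one here.
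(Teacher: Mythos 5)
Your split-method computation is exactly the paper's: per segment the bound is $UB_{ss,j}=(m+1)\cdot 8q=8m(m+1)$, and summing over the $m$ segments and adding $S_n=(m-1)(2m-1)$ gives $8m^2(m+1)+(m-1)(2m-1)$, which is how the paper obtains the value of $UB_{ss}$. The gap is in how you dispose of the joint method. You compare only the two \emph{pure} treatments (all segments split versus the whole task collapsed into one joint block), but the joint/split choice in the method the paper invokes is made \emph{per suspension interval}: any of the $2^{m-1}$ mixed treatments (joining some suspension intervals, splitting the others) is an admissible way to compute an upper bound, and $UB_{ss}$ is the tightest of these. To lower-bound $UB_{ss}$ by the all-split value you must rule out that some mixed treatment beats it, and your argument as written does not do that. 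The paper closes exactly this case with a marginal, per-interval comparison: converting a single suspension interval $S_n^j=2q-1$ into computation inflates the bound by $(2q-1)\cdot 8q=16m^2-8m$ (each unit of converted demand is amplified by roughly $8q$ because $\tau_1,\tau_2,\tau_3$ leave only about $1-\epsilon$ of slack per window of length $8q$ -- the same near-unit-utilization observation you use), whereas splitting that interval costs at most $8m^2+2m-1$; since $16m^2-8m>8m^2+2m-1$ for $m\geq 2$, any joint treatment can be converted interval by interval (starting from $j=1$) into the all-split treatment without increasing the bound, so the all-split value is the minimum over the whole family.

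So the ingredient you need is already in your proposal (the amplification factor $\approx 8q$ per unit of suspension turned into computation), but it has to be applied to each suspension interval separately rather than only to the all-joint extreme; comparing just the two endpoints leaves the mixed treatments unaccounted for. A secondary, minor point: your estimate of the all-joint fixed point via $k^\star\approx 2m+(m-1)(2m-1)/(1-\epsilon)$ is only heuristic; for a clean argument you want a genuine lower bound on that fixed point (or, better, the per-interval comparison above, which makes the all-joint estimate unnecessary).
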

\begin{proof}
  The joint and split methods presented in \cite[Pages
  131-141]{bletsas:thesis} are based on the following concept:
  \begin{itemize}
  \item A self-suspension interval of task $\tau_n$ can be converted
    to computation demand. (joint)
  \item A self-suspension interval of task $\tau_n$ can be treated as
    self-suspension, by considering their suffered worst-case
    interference independently. (split)
  \end{itemize}
  
  The following proof is only sketched since this can be easily proved
  by a simple observation. If we consider a self-suspension interval
  $S_n^j$ as computation (i.e., the joint approach), then, the
  additional workload $(2q-1)$ (due to suspension as computation)
  increases the worst-case response time by $(2q-1)8q = (2m-1)8m =
  16m^2-8m$.  If we simply treat these two consecutive computation
  segments $C_n^{j-1}$ and $C_n^{j}$ by considering their suffered
  worst-case interference independently (i.e., the split approach),
  this treatment only increases the worst-case response time by at
  most $8m^2+2m-1$. Therefore, the \emph{joint} approach is always
  worse than the \emph{split} approach, when $m \geq 2$.  This can be
  formally proved by starting from $j=1$ to convert any joint
  treatment to a split treatment in a stepwise manner.

  Hence, $UB_{ss} = m(m+1)\cdot 8q + S_n = 8m^2(m+1) + (m-1)(2m-1)$ by
  splitting all the computation segments.
\end{proof}

With the above
discussions, we can reach the following lemma:

\begin{lemma}
  \label{lemma:lower-bound-milp-full}
  When $q$ is set to $m$ and $m \geq 2$, the objective function of the
  MILP in Eq.~\eqref{eq:MILP} for task $\tau_n$ in ${\bf T}^{MILP}$ is
  at least $(m-1)(2m-1) + m( 8m^2+6m+1+m\epsilon)$.
\end{lemma}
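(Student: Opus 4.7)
The plan is to exhibit a single feasible point of the full MILP in Eq.~\eqref{eq:MILP} whose objective value matches the claimed bound. I will reuse the assignment already shown to be feasible for the reduced MILP in Lemma~\ref{property:upper-bound-pessmistic-v4}, specialised to $q=m$: for every segment $j=1,\ldots,m$ set
\[
R_j = 8m^2+6m+1+m\epsilon, \quad O_{1,j}=0,\quad O_{2,j}=\epsilon/4,\quad O_{3,j}=\epsilon/2,
\]
$N_{1,j}=4m^2+3m+1$, $N_{2,j}=2m+2$, $N_{3,j}=m+1$, and $O_{i,j}=N_{i,j}=0$ for $i=4,\ldots,n-1$. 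Lemma~\ref{property:upper-bound-pessmistic-v4} already guarantees that this point satisfies every constraint of Eq.~\eqref{eq:MILP-v4}, so the only remaining work is to verify the two boundary inequalities Eq.~\eqref{eq:total-segment-upperbound} and Eq.~\eqref{eq:total-upperbound} that distinguish Eq.~\eqref{eq:MILP} from Eq.~\eqref{eq:MILP-v4}.

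For the per-segment cap Eq.~\eqref{eq:total-segment-upperbound}, I would plug in the value $UB_{ss,j}=(m+1)\cdot 8q = 8m^2+8m$ computed immediately before Lemma~\ref{lemma:upper-bound-milp-full}. The inequality $R_j\leq UB_{ss,j}$ then reduces to $1+m\epsilon\leq 2m$, which follows from $\epsilon<1/m$ (by the standing assumption $0<\epsilon<1/q$) and $m\geq 2$. For the global cap Eq.~\eqref{eq:total-upperbound}, I would use the lower bound $UB_{ss}\geq 8m^2(m+1)+(m-1)(2m-1)$ from Lemma~\ref{lemma:upper-bound-milp-full}. Subtracting the common $S_n=(m-1)(2m-1)$ term from both sides of $S_n+\sum_j R_j\leq UB_{ss}$, the inequality collapses to the same arithmetic condition $1+m\epsilon\leq 2m$.

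Both boundary constraints therefore hold, so the candidate is feasible for the full MILP in Eq.~\eqref{eq:MILP}, and its objective value $S_n+m\cdot R_j=(m-1)(2m-1)+m(8m^2+6m+1+m\epsilon)$ certifies the stated lower bound, since the MILP is a maximisation. I do not foresee any real obstacle; the essential qualitative observation is that the joint/split upper bounds $UB_{ss,j}$ and $UB_{ss}$ proposed in~\cite{ecrts15nelissen} exceed the pessimistic MILP value by an additive slack of order $m(2m-1-m\epsilon)$, and therefore they are too loose to prune the bad solution that already drives the unbounded gap in Theorem~\ref{theorem-unbounded-q}.
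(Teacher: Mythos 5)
Your proposal is correct and follows essentially the same route as the paper: it reuses the feasible assignment from Lemma~\ref{property:upper-bound-pessmistic-v4} with $q=m$ and checks that the two boundary constraints Eq.~\eqref{eq:total-segment-upperbound} and Eq.~\eqref{eq:total-upperbound} are slack, using $UB_{ss,j}=(m+1)\cdot 8q$ and $UB_{ss}\geq 8m^2(m+1)+(m-1)(2m-1)$ from Lemma~\ref{lemma:upper-bound-milp-full}. The reduction of both checks to $1+m\epsilon\leq 2m$ is a correct and slightly more explicit rendering of the paper's inequality verifications.
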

\begin{proof}
  Since $(m+1)\cdot 8q > 8m^2+6m+1+m\epsilon$ when $q$ is set to $m$,
  we know that Eq.~\eqref{eq:total-segment-upperbound} is satisfied by
  adopting the solution in
  Lemma~\ref{property:upper-bound-pessmistic-v4}. Similarly, since
  $8qm(m+1) + (m-1)(2q-1) > (m-1)(2m-1) + m( 8m^2+6m+1+m\epsilon)$
  when $q$ is set to $m$, we also know that
  Eq.~\eqref{eq:total-upperbound} is satisfied by adopting the
  solution in
  Lemma~\ref{property:upper-bound-pessmistic-v4}. Therefore, the
  solution in Lemma~\ref{property:upper-bound-pessmistic-v4} is a
  feasible solution of the MILP in Eq.~\eqref{eq:MILP}, in which we reach the conclusion of
  this lemma.
\end{proof}

\begin{theorem}
\label{theorem-bounded-M}
The result of the MILP in Eq.~\eqref{eq:MILP} (i.e., the MILP in
\cite{ecrts15nelissen}) for task $\tau_n$ in ${\bf T}^{MILP}$ divided
by the exact worst-case response time of task $\tau_n$ is at least\\
$\frac{ m(8m^2+6m+1+m\epsilon) + (m-1)(2m-1)}{16m^2+(m-1)(2m-1)} \geq
\frac{4m+4}{9}$, when $m \geq 2$.
\end{theorem}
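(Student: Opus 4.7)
The plan is to combine the two main lemmas already established in this subsection. From Lemma~\ref{lemma:task-MILP}, setting $q=m$ gives the exact worst-case response time of $\tau_n$ in ${\bf T}^{MILP}$ as $16m^2+(m-1)(2m-1)$. From Lemma~\ref{lemma:lower-bound-milp-full}, again with $q=m$, the MILP in Eq.~\eqref{eq:MILP} admits a feasible solution whose objective value is at least $(m-1)(2m-1)+m(8m^2+6m+1+m\epsilon)$. Dividing these two quantities immediately yields the first inequality of the theorem statement, so all that remains is to bound this ratio from below by $\frac{4m+4}{9}$.

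For the second inequality, I would cross-multiply and reduce the claim to showing
\[
9\bigl(m(8m^2+6m+1+m\epsilon)+(m-1)(2m-1)\bigr)\;\geq\;(4m+4)\bigl(16m^2+(m-1)(2m-1)\bigr)
\]
for every integer $m\geq 2$ and every admissible $\epsilon\in(0,1/m)$. Expanding both sides, the numerator on the left becomes $8m^3+8m^2-2m+1+m^2\epsilon$ (after absorbing $(m-1)(2m-1)=2m^2-3m+1$) and the right-hand side becomes $4(m+1)(18m^2-3m+1)=72m^3+60m^2-8m+4$. Subtracting, the required inequality collapses to $12m^2-10m+5+9m^2\epsilon\geq 0$, which is immediate because the quadratic $12m^2-10m+5$ has negative discriminant $100-240<0$ with positive leading coefficient, and the $\epsilon$-term is non-negative.

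There is no real obstacle in this proof: Lemmas~\ref{lemma:task-MILP} and~\ref{lemma:lower-bound-milp-full} already do all the heavy lifting (the former computes the exact WCRT, the latter exhibits a feasible MILP solution and verifies the boundary constraints Eq.~\eqref{eq:total-segment-upperbound} and Eq.~\eqref{eq:total-upperbound}). The only thing to watch is bookkeeping, in particular that the substitution $q=m$ is applied uniformly on both sides, and that the slack term $m^2\epsilon$ coming from $m\cdot m\epsilon$ lands in the right place in the numerator — but since it only strengthens the inequality, it can safely be dropped for the closed-form bound. Once these substitutions are done, the result is a routine verification of a low-degree polynomial inequality.
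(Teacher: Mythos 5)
Your proposal is correct and follows essentially the same route as the paper: both combine Lemma~\ref{lemma:task-MILP} (exact WCRT $16m^2+(m-1)(2m-1)$ at $q=m$) with Lemma~\ref{lemma:lower-bound-milp-full} (feasible MILP value $(m-1)(2m-1)+m(8m^2+6m+1+m\epsilon)$) and then verify the bound $\frac{4m+4}{9}$ by elementary algebra. The only cosmetic difference is that the paper exhibits the ratio as $\frac{4m+4}{9}$ plus an explicitly positive remainder term, whereas you cross-multiply and check that $12m^2-10m+5+9m^2\epsilon\geq 0$; both computations are equivalent and your arithmetic checks out.
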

\begin{proof}
  This follows directly from Lemmas~\ref{lemma:task-MILP} and
  \ref{lemma:lower-bound-milp-full}, and
  \begin{align*}
& \frac{ m(8m^2+6m+1+m\epsilon) + (m-1)(2m-1)}{16m^2+(m-1)(2m-1)}
= \frac{8m^3+8m^2-2m+1+ m^2\epsilon}{18m^2-3m+1} \\
= \;\;&\frac{4m+4}{9} + \frac{\frac{4m^2}{3} -\frac{10m}{9} + \frac{5}{9} +
  m^2\epsilon}{18m^2-3m+1} > \frac{4m+4}{9}.
  \end{align*}
\end{proof}

\section{Conclusions and Discussions}

This report shows that the schedulability analysis for fixed-priority
preemptive scheduling even with only one segmented self-suspending
task as the lowest-priority task is $co{\cal NP}$-hard in the strong
sense. Moreover, we also show that the upper bound on the worst-case
response time by using a mixed-integer linear programming (MILP)
formulation by Nelissen et al. \cite{ecrts15nelissen} can be at least
$\Omega(m)$ times the exact worst-case response time, where $m$ is the
number of computation segments of task $\tau_n$.

Therefore, how to analyze the worst-case response time tightly remains
as an open problem for self-suspending sporadic task systems even with
one self-suspending sporadic task as the lowest-priority task under
fixed-priority preemptive scheduling.

\vspace{0.2in}
{\bf Acknowledgements.}  The author would like to thank
Dr. Geoffrey Nelissen from CISTER, ISEP, Polytechnic Institute of
Porto and Prof. Dr. Cong Liu from UT Dallas for their feedbacks on an
earlier version of this report, which help the author improve the presentation
flow and the clarity.  This report is supported by DFG, as part of the
Collaborative Research Center SFB876 (http://sfb876.tu-dortmund.de/).

{\small
  \bibliography{real-time,biblio,biblio-summary}{} }
\end{document}